\documentclass{llncs}

\usepackage[fixed]{fontawesome6}

\usepackage{graphicx}

\usepackage{tikz}
\usetikzlibrary{tikzmark}
\usetikzlibrary{calc}
\usetikzlibrary{arrows.meta}
\usetikzlibrary{shapes.geometric}
\usetikzlibrary{patterns}
\usetikzlibrary{decorations.pathmorphing}
\usetikzlibrary{angles}

\usepackage{amsmath,amssymb}
\usepackage{stmaryrd}
\usepackage{mathtools}

\usepackage{thmtools,thm-restate}

\usepackage{booktabs}
\usepackage{nicematrix}
\usepackage{nicefrac}
\usepackage{enumitem}

\usepackage{algorithm}
\usepackage[noend]{algpseudocode}
\usepackage{hyperref}

\usepackage{cleveref}

\usepackage{listings}
\title{Implicit Computation of Filtered Prime Implicants}
\author{
    Edward Liem \and %
    Clemens Dubslaff
}
\institute{
    Eindhoven University of Technology, The Netherlands\\
    \email{\{e.liem,c.dubslaff\}@tue.nl}
}

\begin{document}
\maketitle
\begin{abstract}
Prime implicants (PIs) are central in computer science, with applications in logic minimization, diagnosis, explainable formal methods and AI.
Algorithms for the computation of PIs were first-and-foremost considered on the full input space, not considering the case where the input space might be constrained by context or structural dependencies.
To filter out those PIs that do not fulfill the constraints, existing methods rely on an explicit post-processing step after computing all PIs, which leads to scalability issues due to the number of PIs being exponential.
We propose end-to-end symbolic algorithms that implicitly represent the set of PIs under side constraints.
For this, we extend the prominent method based on decision diagrams by Coudert and Madre and implement a modular tool chain that separates decision-diagram construction, PI computation, and filtering.
\end{abstract}

\section{Introduction}\label{sec:intro}
A prime implicant (PI) of a Boolean function is a minimal conjunction of literals that implies the function.
Prime implicants play an important role in many areas of computer science.
For example, to reduce the size of circuits, one is usually interested in a small collection of terms that suffice to realize an output.
In fault diagnosis, a maximal set of components allowed to fail is important to assess the robustness of the system.
For configurable software systems, we may want to know which configurations yield an effect such as high energy consumption.
All these applications mainly rely on computing sets of PIs.
Minimal sum-of-products for circuits via PIs have been prominently considered by Quine and McCluskey~\cite{Quine1952,McCluskey1956}, followed by further applications to limit the search space in two-level logic minimization~\cite{Brayton1984,Coudert1994}.
In model-based diagnosis, PIs coincide with minimal conflicts and minimal supports~\cite{KleRei87,KleWil87,Coudert1993}.
In the emerging field of explainable formal methods and artificial intelligence (AI) they are the mathematical definition of an \emph{abductive explanation}~\cite{Shih2018,Ignatiev2019} or \emph{feature cause}~\cite{Dubslaff2022,Dubslaff2024}: a minimal set of feature values that entail an effect, independently of all remaining features.

However, computing PIs is challenging, as a Boolean function may possess exponentially many PIs~\cite{ChaMar78}.
Deciding whether a shorter implicant exists given a PI candidate is $\Sigma_2^p$-complete~\cite{Umans2001} and the naive approach enumerating all implicants and then testing each for minimality is only feasible for small examples or for very specific subclasses of Boolean functions.
Practical algorithms therefore rely on \emph{symbolic methods} to represent sets of PIs \emph{implicitly} as a decision diagram~\cite{Coudert1992,Minato1993}.
Orthogonal \emph{search-based} methods avoid the construction of the whole set of PIs and compute individual primes only on demand, shrinking candidates through a sequence of calls to a satisfiability oracle~\cite{Manquinho1997,Deharbe2013,Previti2015}.

Standard approaches target the computation of PIs for an \emph{unconstrained} Boolean function, ranging over the full set of possible variable assignments as inputs while potentially allowing for \emph{don't cares}.
However, many applications such as covering problems rely on solving constrained sub problems.
For example, offset-based methods for two-level logic minimization iteratively compute PIs that cover a minterm, constraining the PIs that only partly cover the minterm~\cite{MalBraNew91}.
Also within PI-based explanations, constraints naturally arise when the set of inputs are restricted to \emph{valid} ones, ruling out all PIs that only cover invalid inputs or inputs a user does not care about~\cite{Shih2018,Cooper2023,Dubslaff2022}.
For these applications, existing computation approaches \emph{explicitly} iterate over the set of all PIs, filtering those that do not fulfill the constraints as a post-processing step.
Therefore, current implementations scaled only to small instances of filtered PI-computations~\cite{Dubslaff2022,Dubslaff2024,HasDubWil26}.

In this paper, we tackle the scalability problem by establishing an \emph{end-to-end symbolic algorithm} to compute filtered PIs, realized through symbolic filters on the implicit representation of all PIs.
For this, we rely on the prominent decision-diagram algorithm to compute PIs by Coudert and Madre~\cite{Coudert1992} and extend it by carefully operating on the implicit PI-set representation.
We present three variants of filters to solve constrained PI problems: an \emph{existential filter} that only includes PIs that touch an input we care about, useful for computing feature causes~\cite{Dubslaff2024}, a \emph{universal filter} that requires the PI only covering a care set of inputs, which has natural applications for constrained abductive explanations~\cite{Cooper2023}, and a \emph{subset filter} that is useful to compute offset PIs and most general feature causes~\cite{MalBraNew91,Dubslaff2022}.
We implement our approach in a modular tool chain as depicted in \Cref{fig:pipeline}, separating decision-diagram construction, PI-computation, and our novel filtering methods.

\begin{figure}[t]
{
\noindent
\begin{center}
\scalebox{1}{
\begin{tikzpicture}
    \matrix[
        column sep={8em, between origins},
        row sep=30pt,
        every node/.style={anchor=center,inner sep=1pt},
        ampersand replacement=\&,
    ] {
        \node[scale=2] (valid) {
            \faFile[regular]
        }; 
        \& 
        \node[outer sep=2pt,scale=1] (eq) {
            $\overline{\mathcal{C}}\cup\mathcal{E}$
        };
        \&
        \node[
            scale=2,
        ] (algo) {
            \faGears
        };
        \&
        \node[
            scale=1.5
        ] (bdd) {
            \faShareNodes
        };
\&
        \\

        \&
        \node[scale=2] (effect) { 
            \faFile[regular]
        };
        \&
        \&
        \node[scale=2] (filter) {
            \faFilter 
        };
        \&
        \node[scale=0.75,outer sep=1em,align=center] (featCause){
            Filtered PIs \\
            (meta-product BDD)
        };\\
    };

    \node [scale=0.75,align=center,anchor=north] (bdd_text) at (bdd.south) {
        Implicit PIs\\
        (meta-product BDD)
    };
    \node[anchor = center, scale=.95, yshift=-1pt, xshift=-1pt] at (valid.center) {$\mathcal{E}$};
    \node[anchor = north, scale=.75, align=center] (valid text) at (valid.south) {
        ON-Set\\ (BDD $\mathcal{E}$)
    };
    \node[anchor = center, scale=.95,yshift=-1pt, xshift=-1pt] at (effect.center) {$\mathcal{C}$};
    \node[anchor = north, scale=.75, align=center] (effect text) at (effect.south) {
        Care Set\\ (BDD $\mathcal{C}$)
    };
    \node [scale=0.75,align=center,anchor=north] at (algo.south) {
        Implicit PI\\
        Algorithm
    };

    \draw[-Stealth] (valid) -- (eq);
    \draw[-Stealth] (effect) -- (eq);
    \draw[-Stealth] (effect) -- (filter);
    \draw[-Stealth] (eq) -- (algo);
    \draw[-Stealth] (algo) -- (bdd);
    \draw[-Stealth] (bdd_text) -- (filter);
    \draw[-Stealth] (filter) -- (featCause);
\end{tikzpicture}
}
\end{center}
}%
\caption{Schema of the pipeline of the implicit existential filter on prime implicants.\label{fig:pipeline}}
\end{figure}

\section{Preliminaries}\label{sec:prelim}

\subsection{Assignments}
\def\pAs{\ensuremath\Delta_X}
\def\tAs{\ensuremath\Theta_X}
Let \(X = \{x_{1},\dots,x_{n}\}\) be a finite set of \(n\) variables.
A (partial) \emph{assignment} of \(X\) is a partial mapping \(\mathbf{p}\colon X \rightharpoonup \mathbb{B}\) where \(\mathbb{B} = \{0,1\}\) is a Boolean domain.
We write \(\mathbf{p}(x) = *\) if \(\mathbf{p}\) is not defined for \(x \in X\).
The \emph{support} of \(\mathbf{p}\), notated \(\mathsf{supp}(\mathbf{p})\), is the set of elements \(x \in X\) where \(\mathbf{p}(x) \neq *\).
An assignment is %
\emph{total} if \(\mathsf{supp}(\mathbf{p}) = X\).
The set of all partial and total assignments over \(X\) is defined as \(\pAs\) and \(\tAs\), respectively.
We write \(\mathbf{p}[v/x]\) to denote the setting of variable \(x\) to \(v \in \mathbb{B} \cup \{*\}\) in assignment \(\mathbf{p}\), i.e.\ \(\mathbf{p}[v/x](x) = v\) and \(\mathbf{p}[v/x](y) = \mathbf{p}(y)\) for \(x,y \in X\) and \(x \neq y\).
As a shorthand, we may write an assignment as a sequence of variables where each variable appears at most once and may appear positively (\(x\)) or negatively (\(\overline{x}\)); for instance, \(\mathbf{p} = x\overline{y}\) is the assignment for which \(\mathbf{p}(x) = 1\), \(\mathbf{p}(y) = 0\) and \(\mathbf{p}(z) = *\) for all \(z \in X\setminus\{x,y\}\).

Given an assignment \(\mathbf{p} \in \pAs\), we define its \emph{cover} to be \(\llbracket{\mathbf{p}}\rrbracket= \{\mathbf{u} \in \tAs \mid \forall x \in \mathsf{supp}(\mathbf{p}).\ \mathbf{p}(x) = \mathbf{u}(x)\}\).
For a set of partial assignments \(P \subseteq \pAs\), we define its cover to be \(\llbracket{P}\rrbracket = \bigcup_{\mathbf{p} \in P}\llbracket{\mathbf{p}}\rrbracket\).

\subsection{Boolean Functions}
\def\boolTrue{\ensuremath{\mathtt{true}}}
\def\boolFalse{\ensuremath{\mathtt{false}}}
A \emph{Boolean function} \(f\colon \tAs \rightarrow \mathbb{B}\) maps total assignments to \(0\) or \(1\).
The \emph{cover} of a function is defined as \(\llbracket{f}\rrbracket = \{\mathbf{u} \in \tAs \mid f(\mathbf{u}) = 1\}\).
The semantic equivalence of two Boolean functions \(f,g\), written as \(f \equiv g\), is defined such \(f(\mathbf{u}) = g(\mathbf{u})\) for all \(\mathbf{u} \in \tAs\).
We define two special Boolean functions \boolTrue{} and \boolFalse{}, called \emph{constants}, where for all \(\mathbf{u} \in \tAs\) %
we have \(\boolTrue(\mathbf{u}) = 1 \text{ and } \boolFalse(\mathbf{u}) = 0\), respectively.
We write \(g = x\) if \(g(\mathbf{u}[1/x]) = 1\) for all \(\mathbf{u} \in \tAs\) and denote by \(f \land g\) the conjunction, \(f \lor g\) the disjunction, and \(\overline{f}\) the negation of Boolean functions with the usual semantics.
The \emph{restriction} of a variable \(x \in X\) to \(b \in \mathbb{B}\) in \(f\) is defined as \(f[b/x](\mathbf{u}) = f(\mathbf{u}[b/x])\) for all \(\mathbf{u} \in \tAs\).
The \emph{negative} and \emph{positive cofactors} of \(f\) w.r.t.\ \(x\) are defined as \(f_{\overline{x}} = f[0/x]\) and \(f_{x} = f[1/x]\), respectively.
The \emph{Shannon decomposition} of \(f\) w.r.t.\ \(x\) is the following identity~\cite{Shannon1949}:
\[
f \equiv (x \land f_{x}) \lor (\overline{x} \land f_{\overline{x}})
\]
The \emph{characteristic function} of \(S \subseteq \tAs\) is a unique Boolean function \(c_{S}\) such that \(c_{S}(\mathbf{u}) = 1\) iff \(\mathbf{u} \in S\).
A partial assignment \(\mathbf{p} \in \pAs\) is an \emph{implicant} of Boolean function \(f\) iff \(f(\mathbf{u}) = 1\) for all \(\mathbf{u} \in \llbracket{\mathbf{p}}\rrbracket\).
An implicant \(\mathbf{p} \in \pAs\) of \(f\) is \emph{prime} iff \(\llbracket{\mathbf{p}}\rrbracket \not\subset \llbracket{\mathbf{q}}\rrbracket\) for all implicants \(\mathbf{q}\) of \(f\).

\subsection{Meta-products}
\def\tAsO{\ensuremath{\Theta_{O}}}
\def\tAsS{\ensuremath{\Theta_{S}}}
A Boolean function over variables \(X\) is unable to represent sets of partial assignments over \(X\).
Coudert and Madre define the notion of a \emph{meta-product} (MP) to encode a set of partial assignments as Boolean functions~\cite{Coudert1992}.
Observe that for an assignment \(\mathbf{p} \in \pAs\), an element \(x \in X\) may be \(\mathbf{p}(x) = 1\), \(\mathbf{p}(x) = 0\), or \(\mathbf{p}(x) = *\).
As the set of assignments \(\pAs\) has \(3^{n}\) elements, \(\log_{2}3^{n}\) Boolean variables suffice to represent any product in \(\pAs\).
However, Coudert and Madre has opted to encode assignment sets with \(2n\) additional variables to maintain direct links between assignment sets and its representation.
In particular, for every variable \(x_{i}\), one additional Boolean variable is used to indicate whether it appears in the assignment, and one additional variable is used to indicate whether the variable \(x_{i}\) is assigned to \(0\) or \(1\).

Given a variable set \(X = \{x_{1}, \dots, x_{n}\}\), define \(2n\) additional variables for \emph{occurrence} \(O = \{o_{1}, \dots, o_{n}\}\) and \emph{sign} \(S = \{s_{1}, \dots, s_{n}\}\) such that \(X\), \(O\), and \(S\) are all mutually exclusive.
Since variables \(o_{i}\) and \(s_{i}\) are linked to \(x_{i}\), let \(\mathsf{var}(o_{i}) = \mathsf{var}(s_{i}) = x_{i}\).
Moreover, we abuse notation and say \(\mathbf{o}(x_{i}) = \mathbf{o}(o_{i})\) and \(\mathbf{s}(x_{i}) = \mathbf{s}(s_{i})\) for \(\mathbf{o} \in \tAsO\), \(\mathbf{s} \in \tAsS\).
Let \(\sigma\colon \tAsO \times \tAsS \rightarrow \pAs\) be a surjective mapping from pairs of occurrence and sign variables to products such that \(\sigma(\mathbf{o},\mathbf{s}) = \mathbf{p}\) iff for all \(x \in X\), 
{
\setlength{\abovedisplayskip}{5pt}
\setlength{\belowdisplayskip}{5pt}
\begin{align*}
&(\mathbf{p}(x) = * \Longleftrightarrow \mathbf{o}(x) = 0) \land {}\\
&(\mathbf{p}(x) = 0 \Longleftrightarrow (\mathbf{o}(x) = 1 \land \mathbf{s}(x) = 0)) \land {}\\
&(\mathbf{p}(x) = 1 \Longleftrightarrow (\mathbf{o}(x) = 1 \land \mathbf{s}(x) = 1))
\end{align*}
}%
In other words, we have that whenever \(\mathbf{p}(x) = *\), it must be that \(x\) does not occur, i.e.\ \(\mathbf{o}(x) = 0\).
If \(\mathbf{p}(x) = b\) for \(b \in \mathbb{B}\), it has to be that \(x\) does occur (\(\mathbf{o}(x) = 1\)) and its setting must be \(b\), i.e.\ \(\mathbf{s}(x) = b\).
Observe that \(\sigma\) may map multiple \(\mathbf{o},\mathbf{s}\) instance pairs to the same product \(\mathbf{p}\).
\begin{example}
Let \(\mathbf{p}\) be a partial assignment over \(X = \{x_{1},x_{2},x_{3}\}\) such that \(\mathbf{p} = x_{1}\overline{x_{3}}\).
We have that \(\sigma(o_{1}\overline{o_{2}}o_{3}, s_{1}s_{2}\overline{s_{3}}) = \sigma(o_{1}\overline{o_{2}}o_{3}, s_{1}\overline{s_{2}}\overline{s_{3}}) = \mathbf{p}\) since the sign variable of \(x_{2}\) plays no role.
\end{example}

Let \(\sigma^{-1}\colon \pAs \rightarrow 2^{(\tAsO \times \tAsS)}\) be the inverse function of \(\sigma\) such that \(\sigma^{-1}(\mathbf{p}) = \{(\mathbf{o,s}) \mid  \sigma(\mathbf{o,s}) = \mathbf{p}\}\) for all \(\mathbf{p} \in \tAs\).
The Boolean function \(m_{P} \colon (\tAsO \times \tAsS) \rightarrow \mathbb{B}\) is the \emph{meta-product} of \(P \subseteq \pAs\) such that \(m_{P}\) is the characteristic function of 
\[
    \Bigg( \bigcup_{\mathbf{p} \in P} \sigma^{-1}(\mathbf{p}) \Bigg)
\]
Meta-products are a canonical functional representation for the set of assignments \(P \subseteq \pAs\)~\cite{Madre1992}.

\def\bddFalse{\ensuremath\bot}
\def\bddTrue{\ensuremath\top}
\subsection{Binary Decision Diagrams}
\emph{Binary decision diagrams} (BDDs) offer a representation for Boolean functions.
A BDDs is an acyclic, directed graph \(\mathcal{G} = (T,V,\lambda,\delta,\rho)\) such that
\begin{itemize}
    \item \(\varnothing \subset T \subseteq \{\textbf{0},\textbf{1}\}\) is a set of \emph{terminals}, 
    \item \(V\) is a finite set of \emph{vertices}, where \(V \cap T \neq \varnothing\), 
    \item \(\lambda\colon V \rightarrow X\) is a \emph{labeling function}, 
    \item \(\delta\colon (V \times \mathbb{B}) \rightarrow (V \cup T)\) is a \emph{decision function}, and
    \item a \emph{root} \(\rho \in (V \cup T)\).
\end{itemize}
The semantics \(\llbracket{u}\rrbracket\) of some \(u \in (V \cup T)\) of BDD \(\mathcal{G} = (T,V,\lambda,\delta,\rho)\) is recursively defined as follows:
{
\setlength{\abovedisplayskip}{5pt}
\setlength{\belowdisplayskip}{5pt}
\begin{align*}
\llbracket{\textbf{0}}\rrbracket &\equiv \boolFalse \\
\llbracket{\textbf{1}}\rrbracket &\equiv \boolTrue \\
\llbracket{v}\rrbracket &\equiv (\overline{\lambda(v)} \land \llbracket{\delta(v,0)}\rrbracket) \lor (\lambda(v) \land \llbracket{\delta(v,1)}\rrbracket)\text{ for } v \in V
\end{align*}
}%
BDD \(\mathcal{G} = (T,V,\lambda,\delta,\rho)\) is semantically equivalent to Boolean function \(g\), notated as \(\llbracket{\mathcal{G}}\rrbracket \equiv g\), iff \(\llbracket{\rho}\rrbracket \equiv g\).
As a shorthand, \(\bddFalse\) is a BDD such that \(\llbracket{\bddFalse}\rrbracket \equiv \boolFalse\), and \(\bddTrue\) is a BDD where \(\llbracket{\bddTrue}\rrbracket \equiv \boolTrue\).
A BDD \(\mathcal{G}\) is \emph{ordered} if for any vertex \(v \in V\) such that \(\lambda(v) = x_{i}\) and vertex \(u \in \{\delta(v,0), \delta(v,1)\}\) such that \(\lambda(u) = x_{j}\), we must have that \(i < j\).
Moreover, an ordered BDD \(\mathcal{G}\) is \emph{reduced} if there does not exist two distinct isomorphic subgraphs of \(\mathcal{G}\) and for all vertices \(v \in V\), \(\delta(v,0) \neq \delta(v,1)\).
Henceforth, we assume all BDDs are reduced and ordered.
By fixing a variable ordering, BDDs offer a canonical representation of Boolean functions~\cite{Bryant1986}. 
Thus, we assume a fixed ordering over variables, and for every Boolean function \(f\) we associate the unique BDD \(\mathcal{F}\) such that \(\llbracket{\mathcal{F}}\rrbracket \equiv f\).
We shall notate Boolean functions using lower-case letters and their semantically equivalent BDD by corresponding script upper-case letters.
Additionally, for non-constant Boolean functions \(f\) and BDD \(\mathcal{F} = (T,V,\lambda,\delta,\rho)\) such that \(\llbracket{\mathcal{F}}\rrbracket \equiv f\) and $\lambda(\rho)=x$, it follows by Shannon decomposition that \(f_{\overline{x}} \equiv \llbracket{\delta(\rho,0)}\rrbracket\) and \(f_{x} \equiv \llbracket{\delta(\rho,1)}\rrbracket\); as a shorthand, we use \(\mathcal{F}_{\overline{x}}\) and \(\mathcal{F}_{x}\) to represent sub-BDDs \(\delta(\rho,0)\) and \(\delta(\rho,1)\) respectively.
Furthermore, we say that \(\mathcal{F}_{x}\) and \(\mathcal{F}_{\overline{x}}\) are the positive and negative cofactors of BDD \(\mathcal{F}\), respectively.
We assume the standard Boolean connectives and operators applies for BDDs.
The restriction of variable \(x \in X\) to \(b \in \mathbb{B}\) in BDD \(\mathcal{F}\), written \(\mathcal{F}[b/x]\), is defined such that \(\llbracket{\mathcal{F}[b/x]}\rrbracket \equiv f[b/x]\) for \(\llbracket{\mathcal{F}}\rrbracket \equiv f\).

\paragraph{BDDs for meta-products.}
As BDD sizes are sensitive to the ordering of variables~\cite{Bryant1986}, we fix the following ordering of meta-product variables to minimize BDD representation sizes and the computational cost of BDD operators over meta-products:
\[
    o_{1} < s_{1} < x_{1} < \dots < o_{n} < s_{n} < x_{n}
\]
We associate every set \(P \subseteq \pAs\) to a meta-product BDD \(\mathcal{P}\) that is uniquely defined by \(\llbracket{\mathcal{P}}\rrbracket \equiv m_{P}\).

\subsection{Implicit Prime Implicants}
Coudert and Madre presented implicit techniques to manipulate primes and derived the following theorem about meta-products:
\begin{theorem}[\cite{Coudert1992}]\label{thm:CM_primes}
The meta-product \(\mathsf{Prime}(f)\) of the set of prime implicants of the function \(f \equiv (\overline{x} \land f_{\overline{x}}) \lor (x \land f_{x})\) is 
\begin{align*}
&(\overline{o_{x}} \land \mathsf{Prime}(f_{\overline{x}} \land f_{x})) \lor {}\\
&(o_{x} \land \overline{s_{x}} \land \mathsf{Prime}(f_{\overline{x}}) \land \overline{\mathsf{Prime}(f_{\overline{x}} \land f_{x})}) \lor {}\\
&(o_{x} \land s_{x} \land \mathsf{Prime}(f_{x}) \land \overline{\mathsf{Prime}(f_{\overline{x}} \land f_{x})})
\end{align*}
\end{theorem}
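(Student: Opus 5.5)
The plan is to prove the identity pointwise on meta-product assignments. Fix $x$ (the top variable) and let $X' = X\setminus\{x\}$. Every partial assignment $\mathbf{p}\in\pAs$ decomposes uniquely as its restriction $\mathbf{p}'$ to $X'$ together with $\mathbf{p}(x)\in\{0,1,*\}$. In the meta-product encoding these three cases correspond exactly to $\overline{o_x}$, to $o_x\land\overline{s_x}$, and to $o_x\land s_x$. The sign bit $s_x$ is free when $o_x=0$, which matches the fact that the first disjunct does not mention $s_x$.

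The functions $f_{\overline{x}}$, $f_x$ and $f_{\overline{x}}\land f_x$ do not depend on $x$, so their prime implicants are assignments over $X'$. Their meta-products are therefore independent of $o_x,s_x,x$, which is what makes the three disjuncts well-typed. It thus suffices to characterize, in each of the three cases, when $\mathbf{p}$ is a prime implicant of $f$ in terms of $\mathbf{p}'$.

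First I record two basic facts:
\begin{itemize}
\item $\mathbf{p}'$ (with $x$ free) is an implicant of $f$ iff it is an implicant of both $f_{\overline{x}}$ and $f_x$, i.e.\ of $f_{\overline{x}}\land f_x$.
\item $\mathbf{p}'[b/x]$ is an implicant of $f$ iff $\mathbf{p}'$ is an implicant of $f[b/x]$.
\end{itemize}
Both follow directly from the cover definitions, since $\llbracket\mathbf{p}'[b/x]\rrbracket$ consists of the total assignments in $\llbracket\mathbf{p}'\rrbracket$ with $x=b$.

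Case $\mathbf{p}(x)=*$. Any implicant $\mathbf{q}$ whose cover strictly contains $\llbracket\mathbf{p}\rrbracket$ must also leave $x$ free. Hence primality of $\mathbf{p}$ for $f$ is equivalent to primality of $\mathbf{p}'$ for $f_{\overline{x}}\land f_x$, which gives the first disjunct.

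Case $\mathbf{p}(x)=0$ (the case $\mathbf{p}(x)=1$ is symmetric). I claim that $\mathbf{p}$ is prime for $f$ iff $\mathbf{p}'$ is prime for $f_{\overline{x}}$ and $\mathbf{p}'$ is not prime for $f_{\overline{x}}\land f_x$.

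For the forward direction, suppose $\mathbf{p}$ is prime. Then $\mathbf{p}'$ implies $f_{\overline{x}}$. If some $\mathbf{q}'$ strictly generalized it for $f_{\overline{x}}$, then $\mathbf{q}'[0/x]$ would strictly generalize $\mathbf{p}$, so $\mathbf{p}'$ is prime for $f_{\overline{x}}$. Moreover, $\mathbf{p}'$ cannot even be an implicant of $f_{\overline{x}}\land f_x$, since otherwise $\mathbf{p}'$ with $x$ free would be a strictly larger implicant of $f$. In particular it is not prime for that function.

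For the converse, suppose $\mathbf{p}'$ is prime for $f_{\overline{x}}$ and not prime for $f_{\overline{x}}\land f_x$. Let $\mathbf{q}$ be an implicant of $f$ with $\llbracket\mathbf{p}\rrbracket\subset\llbracket\mathbf{q}\rrbracket$. If $\mathbf{q}(x)=0$, the restrictions give a strict generalization of $\mathbf{p}'$ for $f_{\overline{x}}$, contradicting its primality. Otherwise $\mathbf{q}(x)=*$ and $\mathbf{q}'\supseteq\mathbf{p}'$ in cover, with $\mathbf{q}'$ an implicant of $f_{\overline{x}}\land f_x$, hence of $f_{\overline{x}}$. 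Primality of $\mathbf{p}'$ for $f_{\overline{x}}$ then forces $\mathbf{q}'=\mathbf{p}'$ (equal covers). So $\mathbf{p}'$ is an implicant of $f_{\overline{x}}\land f_x$. Any strict generalization of it there would also be one for $f_{\overline{x}}$, so $\mathbf{p}'$ would be prime for $f_{\overline{x}}\land f_x$, a contradiction.

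The main obstacle is precisely this converse step. The term $\overline{\mathsf{Prime}(f_{\overline{x}}\land f_x)}$ only excludes primes of the conjunction rather than all its implicants, and the argument must use that a prime of $f_{\overline{x}}$ which is an implicant of the smaller function $f_{\overline{x}}\land f_x$ is automatically prime there.

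Finally, I assemble the three cases. The meta-product of $\mathsf{Prime}(f)$ is true at $(\mathbf{o},\mathbf{s})$ iff $\sigma(\mathbf{o},\mathbf{s})$ is prime. The case split on $o_x,s_x$ together with the characterizations above, and the fact that $\sigma$ restricted to $X'$ yields $\mathbf{p}'$, give exactly the stated disjunction. The disjuncts are mutually exclusive by their $o_x,s_x$ literals, so no overlap issues arise.
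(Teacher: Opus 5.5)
The paper gives no proof of this theorem. It imports the result from Coudert and Madre and uses it only to justify Algorithm~1, so there is no in-paper argument to compare against, and your proposal supplies a self-contained proof. That proof is correct. The following all check out:
\begin{itemize}
\item the case split on $\mathbf{p}(x)\in\{*,0,1\}$;
\item the two cofactor facts;
\item the observation that any strict generalization of an $x$-free assignment is itself $x$-free;
\item your treatment of the delicate converse. A prime of $f_{\overline{x}}$ that is an implicant of $f_{\overline{x}}\land f_x$ is automatically prime there. So for primes of $f_{\overline{x}}$, ``not prime for $f_{\overline{x}}\land f_x$'' coincides with ``not an implicant of $f_{\overline{x}}\land f_x$''.
\end{itemize}
Nothing in the argument uses that $x$ is the top variable, so the identity holds for any $x$.

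Two points should be made explicit. First, your convention that $\mathsf{Prime}(f_{\overline{x}})$, $\mathsf{Prime}(f_x)$ and $\mathsf{Prime}(f_{\overline{x}}\land f_x)$ are meta-products over $X\setminus\{x\}$ is necessary, not just convenient. Under the paper's literal definition of $m_P$ over all of $X$, every prime of an $x$-independent function leaves $x$ free. Its meta-product then has the form $\overline{o_x}\land(\cdots)$, so the second and third disjuncts collapse to $\mathtt{false}$, and the identity already fails for $f=x$. This is exactly why Algorithm~1 applies the restrictions $\mathcal{P}_{*}[0/o_x]$, $\mathcal{P}_{0}[0/o_x]$ and $\mathcal{P}_{1}[0/o_x]$: they turn the full meta-products into your $X\setminus\{x\}$ versions. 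Your sentence that these meta-products are ``independent of $o_x,s_x,x$'' is true only under this convention, so state it as a definition.

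Second, in the converse of the case $\mathbf{p}(x)=0$, say explicitly that $\mathbf{p}=\mathbf{p}'[0/x]$ is an implicant of $f$ (by your second fact) before ruling out strict generalizations. Also note that $\mathbf{q}(x)=1$ is impossible, because the nonempty cover $\llbracket\mathbf{p}\rrbracket$ lies entirely in the $x=0$ half. Both are one-line additions.
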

Using \Cref{thm:CM_primes}, an algorithm over BDDs may naturally be derived.
Given a BDD \({\mathcal{F}}\), \Cref{algo:cmPrime} outputs the meta-product BDD \(\mathcal{P}\) such that \(\llbracket{\mathcal{P}}\rrbracket \equiv m_{P}\) for \(P \subseteq \pAs\) and for all \(\mathbf{p} \in P\), \(\mathbf{p}\) is a prime implicant of \(f\), where \(\llbracket{\mathcal{F}}\rrbracket \equiv f\).
This algorithm computes prime implicants for \(f\) by recursively considering the primes of the positive cofactor (\(f_{x}\)), negative cofactor (\(f_{\overline{x}}\)), and the case where \(x\) does not affect the output of \(f\) (\(f_{x} \land f_{\overline{x}}\)).

\newcommand{\Prime}{\textsc{Prime}}
\begin{algorithm}
	\caption{$\Prime(\mathcal{F})$}\label{algo:cmPrime}
	\textbf{Input: } BDD $\mathcal{F} = (T, V, \lambda, \delta, \rho)$ \\
	\textbf{Output: } meta-product BDD $\mathcal{P}$
	\begin{algorithmic}[1]
		\If{$\llbracket{\mathcal{F}}\rrbracket \equiv \boolTrue$} {\Return{$\overline{o_1} \land \dots \land \overline{o_n}$}}
		\ElsIf{$\llbracket{\mathcal{F}}\rrbracket \equiv \boolFalse$} \Return{\bddFalse}
		\Else%
		\State{$x \leftarrow \lambda(\rho)$}\vspace{1em}
		\State{$\mathcal{P}_{*} \leftarrow \Prime(\mathcal{F}_x \land \mathcal{F}_{\overline{x}})$}
		\State{$\mathcal{P}_{0} \leftarrow \Prime(\mathcal{F}_{\overline{x}})$}
		\State{$\mathcal{P}_{1} \leftarrow \Prime(\mathcal{F}_x)$}\vspace{1em}

		\State{\Return{
				\begin{minipage}[t]{0.5\textwidth}
					$(\overline{o_x} \land \mathcal{P}_{*}[0/o_x]) \lor {}$ \\
					$(o_x \land \overline{s_x} \land \mathcal{P}_{0}[0/o_{x}] \land \overline{\mathcal{P}_{*}[0/o_x]}) \lor {}$\\
					$(o_x \land s_x \land \mathcal{P}_1[0/o_{x}] \land \overline{\mathcal{P}_{*}[0/o_x]})$
				\end{minipage}
			}}
		\EndIf%
	\end{algorithmic}
\end{algorithm}

\section{Filtering Assignment Sets}\label{sec:filter}
In this section, we present algorithms that operate on the structure of meta-product BDDs to filter and remove partial assignments not meeting certain criteria.
Given a set of assignments \(P \subseteq \pAs\) and a Boolean function \(f\), we would like to obtain the set \(Q \subseteq P\) such that for all \(\mathbf{q} \in Q\), the predicate \(\Phi(\mathbf{q},f)\) holds.
In particular, we define three filtering predicates \(\Phi=\{\exists,\forall,\supseteq\}\):
\begin{align*}
\exists(\mathbf{q},f) &= \exists \mathbf{u} \in \llbracket{f}\rrbracket.\ \mathbf{u} \in \llbracket{\mathbf{q}}\rrbracket\\
\forall(\mathbf{q},f) &= \forall \mathbf{u} \in \llbracket{f}\rrbracket.\ \mathbf{u} \in \llbracket{\mathbf{q}}\rrbracket\\
{\supseteq}(\mathbf{q},f) &= \forall \mathbf{u} \in \llbracket{\mathbf{q}}\rrbracket.\ \mathbf{u} \in \llbracket{f}\rrbracket
\end{align*}

\begin{figure}[t]
\centering
{
\newlength{\picGap} \setlength{\picGap}{10pt}
\newlength{\boxWd} \setlength{\boxWd}{\dimexpr((\textwidth-(\picGap*2))/3)\relax}
\newlength{\boxHt} \setlength{\boxHt}{50pt}
\begin{tikzpicture}[
    picBox/.style={
        minimum width=\boxWd,
        minimum height=\boxHt,
        draw,
        outer sep=0pt,
        inner sep=0pt,
    }
]
\matrix[
    column sep=\picGap,
    opacity=0,
] {
    \node[picBox] (eFrame) {}; & 
    \node[picBox] (aFrame) {}; & 
    \node[picBox] (cFrame) {}; \\
};

\newcounter{subfigCounter} \setcounter{subfigCounter}{1}
\foreach \x / \y in {eFrame/\exists,aFrame/\forall,cFrame/\supseteq} {
    \begin{scope}[
        valid/.style={
            draw,
            circle,
            fill=green,
            fill opacity=0.2,
        },
        invalid/.style={
            draw,
            circle,
            fill opacity=0.33,
            pattern=north east lines,
            pattern color=red,
        },
    ]
        \clip (\x.north west) -- (\x.north east) -- (\x.south east) -- (\x.south west);
        \node[
            shape=ellipse,
            inner sep=0pt,
            outer sep=0pt,
            minimum width=50pt,
            minimum height=25pt,
            fill=blue,
            fill opacity=0.2,
            rotate=20,
        ] (f) at (\x) {};
        \node[
            scale=0.66
        ] at (f) {\(\llbracket{f}\rrbracket\)};
    
        \ifnum\pdfstrcmp{\x}{eFrame}=0%
            \node[
                valid,
                minimum width=20pt
            ] at (f.135) {};
            \node[invalid,minimum width=15pt] at ($(\x)!0.6!(\x.south east)$) {};
        \fi
        \ifnum\pdfstrcmp{\x}{aFrame}=0%
            \node[
                valid,
                minimum width=64pt,
            ] at (\x) {};
            \node[invalid, minimum width=20pt] at (f.135) {};
            \node[invalid,minimum width=15pt] at ($(\x)!0.6!(\x.south east)$) {};
        \fi
        \ifnum\pdfstrcmp{\x}{cFrame}=0%
            \node[
                valid,
                minimum width=10pt
            ] at ($(f)!0.5!(f.0)$) {};
            \node[invalid, minimum width=20pt] at (f.135) {};
            \node[invalid,minimum width=15pt] at ($(\x)!0.6!(\x.south east)$) {};
        \fi

        \node[
            anchor=north west,
        ] at (\x.north west) {\(\y\)};
    \end{scope}
    \node[picBox] (box) at (\x) {};
    \node[anchor=north] at (box.south) {
        \textbf{(\alph{subfigCounter})}
    };

    \stepcounter{subfigCounter}
}

\end{tikzpicture}
}
\vskip-15pt
\caption{Illustration of our filtering operations in an assignment space.
The covers of the function $f$ are depicted in purple, example partial assignments that are and are not satisfying the filter in green and red, respectively.}\label{fig:filter:ex}
\end{figure}

\Cref{fig:filter:ex} illustrates the idea behind each filtering predicate.
The cover of \(f\) is shown in purple, and the cover of valid assignments in green.
For the \(\exists\) predicate, we keep assignments for which there is at least one instance in \(f\) that is covered (\Cref{fig:filter:ex}a).
In the case of our \(\forall\) predicate, we eliminate assignments that do not completely cover \(f\) (\Cref{fig:filter:ex}b).
In contrast, filtering with predicate \({\supseteq}\) keeps assignments that are completely subsumed by the cover of \(f\) (\Cref{fig:filter:ex}c).

We first present \Cref{algo:eFilter} to filter a set of assignments based on the \(\exists\) predicate.
Given a meta-product BDD \(\mathcal{P}\) such that \(\llbracket{\mathcal{P}}\rrbracket \equiv m_{P}\) for some \(P \subseteq \pAs\), and a BDD \(\mathcal{F}\), we traverse \(\mathcal{P}\) and trim branches not satisfying the \(\exists\) criteria.
Along a path starting from the root meta-product BDD to a \(\textbf{1}\) terminal, we update \(\mathcal{F}\) accordingly. 
In particular, at a given node in \(\mathcal{P}\) associated with variable \(x_{i}\), whenever \(x_{i}\) does not occur in the assignment, i.e.\ \(o_{i}\) is \(\boolFalse\), the cover of the assignment may intersect with either the positive or negative cofactors of \(\mathcal{F}\), and thus we recurse on both sides and union both resulting sets.
Otherwise, if our assignment contains \(\overline{x_{i}}\) we consider the negative cofactor of \(\mathcal{F}\), and if our assignment contains \(\overline{x_{i}}\) we recurse on the postive cofactor of \(\mathcal{F}\).

\newcommand{\eFilter}{\ensuremath{\textsc{Filter}_{\exists}}}
\begin{algorithm}
\caption{$\eFilter(\mathcal{P},\mathcal{F})$}\label{algo:eFilter}
\textbf{Input: } meta-product BDD $\mathcal{P} = (T,V,\lambda,\delta,\rho)$, BDD $\mathcal{F} = (T',V',\lambda',\delta',\rho')$ \\
\textbf{Output: } meta-product BDD $\mathcal{Q}$
\begin{algorithmic}[1]
\If{$\llbracket{\mathcal{P}}\rrbracket \equiv \boolFalse$ or $\llbracket{\mathcal{F}}\rrbracket \equiv \boolFalse$} \Return{$\bddFalse$}
\ElsIf{$\llbracket{\mathcal{F}}\rrbracket \equiv \boolTrue$} \Return{$\mathcal{P}$}
\ElsIf{$\llbracket{\mathcal{P}}\rrbracket \equiv \boolTrue$}
    \State{\(x_{j} \leftarrow \lambda'(\rho')\)}
    \State{\Return{$\begin{aligned}[t]
    &(\overline{o_j} \land \eFilter(\mathcal{P}, \mathcal{F}_{x_j})) \lor (\overline{o_j} \land \eFilter(\mathcal{P}, \mathcal{F}_{\overline{x_j}})) \lor {}\\
    &(o_j \land s_j \land \eFilter(\mathcal{P}, \mathcal{F}_{x_j})) \lor {}\\
    &(o_j \land \overline{s_j} \land \eFilter(\mathcal{P}, \mathcal{F}_{\overline{x_j}}))
    \end{aligned}$}}
\Else{}
\State{$x_i \leftarrow \mathsf{var}(\lambda(\rho))$}
\State{$x_j \leftarrow \lambda'(\rho')$}
\If{$i = j$}
    \State{\Return{$\begin{aligned}[t]
    &(\overline{o_i} \land \eFilter(\mathcal{P}_{\overline{o_i}}, \mathcal{F}_{x_j})) \lor (\overline{o_i} \land \eFilter(\mathcal{P}_{\overline{o_i}}, \mathcal{F}_{\overline{x_j}})) \lor {}\\
    &(o_i \land s_i \land \eFilter(\mathcal{P}_{o_{i}s_{i}}, \mathcal{F}_{x_j})) \lor {}\\
    &(o_i \land \overline{s_i} \land \eFilter(\mathcal{P}_{o_{i}\overline{s_{i}}}, \mathcal{F}_{\overline{x_j}}))
    \end{aligned}$}}
\ElsIf{$i < j$}
    \State{\Return{$\begin{aligned}[t]
    &(o_i \land s_i \land \eFilter(\mathcal{P}_{o_{i}s_{i}}, \mathcal{F})) \lor {}\\
    &(o_i \land \overline{s_i} \land \eFilter(\mathcal{P}_{o_{i}\overline{s_{i}}}, \mathcal{F})) \lor {}\\
    &(\overline{o_i} \land \eFilter(\mathcal{P}_{\overline{o_i}}, \mathcal{F}))
    \end{aligned}$}}
\ElsIf{$i > j$}
    \State{\Return{$\begin{aligned}[t]
    &(\overline{o_j} \land \eFilter(\mathcal{P}, f_{x_j})) \lor (\overline{o_j} \land \eFilter(\mathcal{P}, \mathcal{F}_{\overline{x_j}})) \lor {}\\
    &(o_j \land s_j \land \eFilter(\mathcal{P}, \mathcal{F}_{x_j})) \lor {}\\
    &(o_j \land \overline{s_j} \land \eFilter(\mathcal{P}, \mathcal{F}_{\overline{x_j}}))
    \end{aligned}$}}
\EndIf{}
\EndIf{}
\end{algorithmic}
\end{algorithm}

\begin{restatable}{theorem}{pExist}
Let \(\mathcal{P}\) be the meta-product BDD of a set of assignments \(P \subseteq \pAs\) and BDD \(\mathcal{F}\) s.t.\ \(\llbracket{\mathcal{F}}\rrbracket \equiv f\) and \(\llbracket{f}\rrbracket \subseteq \tAs\).
\(\eFilter(\mathcal{P},\mathcal{F})\) is the meta-product of the largest set of assignments \(Q \subseteq P\) where for all \(\mathbf{q} \in Q\), there exists a \(\mathbf{u} \in \llbracket{f}\rrbracket\) such that \(\mathbf{u} \in \llbracket{\mathbf{q}}\rrbracket\).
\end{restatable}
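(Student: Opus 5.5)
We prove the statement by well-founded induction on the pair of BDD sizes \((|\mathcal{P}|,|\mathcal{F}|)\) under the product ordering. Every recursive call of \Cref{algo:eFilter} strictly decreases at least one component and never increases the other, since it passes either a cofactor of \(\mathcal{P}\) or a cofactor of \(\mathcal{F}\). The invariant we prove is semantic. Write \(Q(P,f)=\{\mathbf{q}\in P \mid \llbracket{\mathbf{q}}\rrbracket\cap\llbracket{f}\rrbracket\neq\varnothing\}\), which is exactly the largest subset of \(P\) satisfying \(\exists(\cdot,f)\). We show that \(\llbracket{\eFilter(\mathcal{P},\mathcal{F})}\rrbracket \equiv m_{Q(P,f)}\).

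A subtlety is that the recursion reaches sub-BDDs of \(\mathcal{P}\) and \(\mathcal{F}\) that no longer mention the variables already passed. We therefore strengthen the claim to a ``suffix'' version. Let the current level be \(k\), meaning all variables of \(\mathcal{P}\) and \(\mathcal{F}\) are indexed \(\ge k\). Then the output is \(m_{Q}\) restricted to variables \(x_k,\dots,x_n\), and the triples \((o_i,s_i,x_i)\) for \(i<k\) are treated as already fixed by the path. The key tool is the following product decomposition lemma. For \(\mathbf{q}\) and a total \(\mathbf{u}\), \(\mathbf{u}\in\llbracket{\mathbf{q}}\rrbracket\) iff \(\mathbf{u}(x_i)=\mathbf{q}(x_i)\) for each \(i\) with \(\mathbf{q}(x_i)\neq *\). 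Hence the existence of a \(\mathbf{u}\in\llbracket{f}\rrbracket\cap\llbracket{\mathbf{q}}\rrbracket\) splits on the value of the top variable \(x_j\). If \(\mathbf{q}(x_j)=*\), some witness exists in \(f_{x_j}\) or in \(f_{\overline{x_j}}\). If \(\mathbf{q}(x_j)=b\in\mathbb{B}\), a witness exists in \(f[b/x_j]\). Likewise, meta-products decompose along \(o_i,s_i\). We have \(m_P\equiv(\overline{o_i}\land m_{P_*})\lor(o_i\land\overline{s_i}\land m_{P_0})\lor(o_i\land s_i\land m_{P_1})\), where \(P_v\) is the set of tails of assignments in \(P\) with \(\mathbf{p}(x_i)=v\). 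Here we use that \(s_i\) is unconstrained when \(o_i=0\), since \(\sigma^{-1}\) contains both sign values. This is also what makes \(\mathcal{P}_{\overline{o_i}}\) well defined as the cofactor without \(s_i\).

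With these lemmas the case analysis follows the algorithm. In the base cases, \(P=\varnothing\) or \(f\equiv\boolFalse\) gives \(Q=\varnothing\). If \(f\equiv\boolTrue\), every \(\mathbf{q}\) has a nonempty cover, so \(Q=P\). In the case \(i=j\), apply both decompositions and the induction hypothesis to each of the four calls. The \(\overline{o_i}\) branch yields \(m_{Q(P_*,f_{x})}\lor m_{Q(P_*,f_{\overline{x}})}=m_{Q(P_*,f_x)\cup Q(P_*,f_{\overline{x}})}\), and this union is exactly \(Q\) on the \(*\)-part by the lemma. The signed branches match \(Q(P_b,f[b/x])\). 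In the case \(i<j\), \(f\) does not depend on \(x_i\). Witnesses can then choose \(x_i\) freely, so each part is filtered by the unchanged \(f\). In the case \(i>j\) (and in the case \(\mathcal{P}\equiv\boolTrue\) below a level), \(P\) does not depend on \(o_j,s_j\), so \(\mathcal{P}\) serves as all three cofactors, and the formula is the \(i=j\) formula with \(\mathcal{P}_{\overline{o_j}}=\mathcal{P}_{o_js_j}=\mathcal{P}_{o_j\overline{s_j}}=\mathcal{P}\). We read the typo \(f_{x_j}\) as \(\mathcal{F}_{x_j}\). Finally, maximality is immediate because \(Q(P,f)\) is defined as the full set of satisfying elements.

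The main obstacle will be stating the suffix invariant precisely, in particular how \(\mathcal{P}\equiv\boolTrue\) at an intermediate level should be read. At that point it encodes all tails on the remaining variables, including both sign values where the occurrence bit is \(0\). We must also ensure that the returned BDD contains no spurious encodings with \(\overline{o_j}\) and an arbitrary \(s_j\) that break canonicity. We will handle this by verifying that each returned disjunct is closed under flipping \(s_j\) whenever \(o_j=0\). This holds because the \(\overline{o_j}\) disjuncts do not mention \(s_j\), which keeps the result equal to \(m_Q\) rather than a mere superset of its image under \(\sigma\).
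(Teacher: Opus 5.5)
Your proposal is correct and follows essentially the same route as the paper. Both arguments induct along the recursion of the algorithm, use the same base cases and the same $i=j$ / $i<j$ / $i>j$ case split, and decompose both the existence of a witness and the meta-product along the top variable. The differences are in bookkeeping only: you characterize the output directly as $m_{Q(P,f)}$, whereas the paper proves two conditions separately (every element of $Q$ has a witness, no element of $P\setminus Q$ does). Your well-founded induction on the pair of BDD sizes also covers the calls that leave $\mathcal{P}$ unchanged (the $i>j$ case), which the paper's structural induction on $\mathcal{P}$ does not literally cover.
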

{
\allowdisplaybreaks
\begin{proof}
Given a meta-product BDD \(\mathcal{P}\) encoding \(P \subseteq \pAs\) and BDD \(\llbracket{\mathcal{F}}\rrbracket \equiv f\) for Boolean function \(f\), we show that \(\eFilter(\mathcal{P},\mathcal{F}) \equiv \mathcal{M}_{Q}\) where:
\begin{align}
(\forall \mathbf{q} \in Q.\ \exists \mathbf{u} \in \llbracket{f}\rrbracket.\ \mathbf{u} \in \llbracket{\mathbf{q}}\rrbracket) &\land {}\label{proof:eq:1:1}\\
(\forall \mathbf{r} \in P \setminus Q.\ \forall \mathbf{w} \in \llbracket{f}\rrbracket.\ \mathbf{w} \not\in \llbracket{\mathbf{r}}\rrbracket)&\label{proof:eq:1:2}
\end{align}

We prove by structural induction on \(\mathcal{P}\).
If \(\llbracket{\mathcal{P}}\rrbracket \equiv \boolFalse\), it represents the empty set of assignments, and thus we return the empty set representation (\(\bddFalse\)), and our claim holds.
If \(\mathcal{P}\) is \(\boolTrue\) (lines 3--5), we perform induction on \(\mathcal{F}\):
\begin{itemize}[align=left,leftmargin=0pt,itemindent=*]
\item[\itshape Base Case (\(f\) is constant):] If \(f \equiv \llbracket{\mathcal{F}}\rrbracket \equiv \boolFalse\), we return the empty set representation, and our claim holds. 
If \(f \equiv \llbracket{\mathcal{F}}\rrbracket \equiv \boolTrue\), we return the largest set satisfying our claim, and thus we return \(\bddTrue\).
\item[\itshape Step Case (\(f\) is not constant):] \sloppy Let \(\mathcal{F}\) be rooted by variable \(x_{j}\).
Assume the induction hypothesis that our claim holds for \(\eFilter(\boolTrue,\mathcal{F}_{\overline{x_{j}}})\) and \(\eFilter(\boolTrue,\mathcal{F}_{x_{j}})\), and let \(Q_{\overline{x_{j}}}\) and \(Q_{x_{j}}\) be the resulting sets, respectively.
We derive the following: 
{\setlength{\abovedisplayskip}{2pt}\setlength{\belowdisplayskip}{0pt}
\begin{align*}
& \forall \mathbf{q} \in {Q}_{\overline{x_{j}}}.\ \exists \mathbf{u} \in \llbracket{f_{\overline{x_j}}}\rrbracket.\ \mathbf{u} \in \llbracket{\mathbf{q}}\rrbracket\\
\Longleftrightarrow\ &\\
& \forall \mathbf{q} \in {Q}_{\overline{x_{j}}}.\ \exists \mathbf{u}[0/x_{j}] \in \llbracket{f}\rrbracket.\ \mathbf{u}[0/x_{j}] \in \llbracket{\mathbf{q}}\rrbracket\\
\Longrightarrow\ &\\
& \forall \mathbf{q} \in \{\mathbf{q'} \mid \mathbf{q'}[0/x_{j}] \in{Q}_{\overline{x_{j}}}\} \cup \{\mathbf{q'} \mid \mathbf{q'}[*/x_{j}] \in{Q}_{\overline{x_{j}}}\}.\ \exists \mathbf{u} \in \llbracket{f}\rrbracket.\ \mathbf{u} \in \llbracket{\mathbf{q}}\rrbracket\\
\end{align*}\\[-2em]
and
\begin{align*}
& \forall \mathbf{q} \in {Q}_{x_{j}}.\ \exists \mathbf{u} \in \llbracket{f_{x_j}}\rrbracket.\ \mathbf{u} \in \llbracket{\mathbf{q}}\rrbracket\\
\Longleftrightarrow\ &\\
& \forall \mathbf{q} \in {Q}_{x_{j}}.\ \exists \mathbf{u}[1/x_{j}] \in \llbracket{f}\rrbracket.\ \mathbf{u}[1/x_{j}] \in \llbracket{\mathbf{q}}\rrbracket\\
\Longrightarrow\ &\\
& \forall \mathbf{q} \in \{\mathbf{q'} \mid \mathbf{q'}[1/x_{j}] \in{Q}_{x_{j}}\}\cup\{\mathbf{q'} \mid \mathbf{q'}[*/x_{j}] \in{Q}_{x_{j}}\}.\ \exists \mathbf{u} \in \llbracket{f}\rrbracket.\ \mathbf{u} \in \llbracket{\mathbf{q}}\rrbracket\\
\end{align*}
}
In line 5, we return the set \(Q\) defined as follows:
{\setlength{\abovedisplayskip}{2pt}\setlength{\belowdisplayskip}{2pt}
\begin{align*}
    &\{\mathbf{q} \mid \mathbf{q}[*/x_{j}]  \in Q_{\overline{x_{j}}}\} \cup \{\mathbf{q} \mid \mathbf{q}[*/x_{j}] \in Q_{x_{j}}\} \cup {}\\
    &\{\mathbf{q} \mid \mathbf{q}[0/x_{j}] \in Q_{\overline{x_{j}}}\} \cup \{\mathbf{q} \mid \mathbf{q}[1/x_{j}]  \in Q_{x_{j}}\}
\end{align*}}%
By our IH, we have that \(Q\) satisfies \Cref{proof:eq:1:1}.
Moreover, by our IH, we have that Q satisfies \Cref{proof:eq:1:2}.
\begin{align*}
& (\forall \mathbf{r} \in \overline{{Q}_{\overline{x_{j}}}}.\ \forall \mathbf{w} \in \llbracket{f_{\overline{x_{j}}}}\rrbracket.\ \mathbf{w} \not\in \llbracket{\mathbf{r}}\rrbracket) \land (\forall \mathbf{r} \in \overline{{Q}_{x_{j}}}.\ \forall \mathbf{w} \in \llbracket{f_{x_j}}\rrbracket.\ \mathbf{w} \not\in \llbracket{\mathbf{r}}\rrbracket)\\
\Longleftrightarrow\ &\\
& (\forall \mathbf{r} \in \overline{{Q}_{\overline{x_{j}}}}.\ \forall \mathbf{w}[0/x_{j}] \in \llbracket{f}\rrbracket.\ \mathbf{w}[0/x_{j}] \not\in \llbracket{\mathbf{r}}\rrbracket) \land {}\\
& \qquad (\forall \mathbf{r} \in \overline{{Q}_{x_{j}}}.\ \forall \mathbf{w}[1/x_{j}] \in \llbracket{f}\rrbracket.\ \mathbf{w}[1/x_{j}] \not\in \llbracket{\mathbf{r}}\rrbracket)\\
\Longleftrightarrow\ & \{\text{Property }\mathbf{w}[0/x] \notin \llbracket{\mathbf{r}[1/x]}\rrbracket \land \mathbf{w}[1/x] \notin \llbracket{\mathbf{r}[0/x]}\rrbracket\}\\
& (\forall \mathbf{r} \in \overline{\{\mathbf{s} \mid \mathbf{s}[*/x_{j}] \in Q_{\overline{x_{j}}} \lor \mathbf{s}[0/x_{j}] \in Q_{\overline{x_{j}}}\}}.\ \forall \mathbf{w}[0/x_{j}] \in \llbracket{f}\rrbracket.\ \mathbf{w}[0/x_{j}] \not\in \llbracket{\mathbf{r}}\rrbracket) \land {}\\
& (\forall \mathbf{r} \in \overline{\{\mathbf{s} \mid \mathbf{s}[*/x_{j}] \in Q_{x_{j}} \lor \mathbf{s}[1/x_{j}] \in Q_{x_{j}}\}}.\ \forall \mathbf{w}[1/x_{j}] \in \llbracket{f}\rrbracket.\ \mathbf{w}[1/x_{j}] \not\in \llbracket{\mathbf{r}}\rrbracket)\\
\Longleftrightarrow\ &\\
& \forall \mathbf{r} \in \overline{Q}.\ \forall \mathbf{w} \in \llbracket{f}\rrbracket.\ \mathbf{w} \not\in \llbracket{\mathbf{r}}\rrbracket
\end{align*}
\end{itemize}

Let \(\mathcal{P}\) be rooted by variable \(o_{i}\).
Assume the induction hypothesis that our claim holds for \(\eFilter(\mathcal{P}_{\overline{o_{i}}},\mathcal{F})\), \(\eFilter(\mathcal{P}_{o_{i}\overline{s_{i}}},\mathcal{F})\), and \(\eFilter(\mathcal{P}_{o_{i}s_{i}}, \mathcal{F})\).
If \(\llbracket{\mathcal{F}}\rrbracket \equiv \boolFalse\), we return the empty set representation, and our claim holds. If \(\llbracket{\mathcal{F}}\rrbracket \equiv \boolTrue\), we return the largest set satisfying our claim, \(\mathcal{P}\).
Otherwise, let \(\mathcal{F}\) be rooted by variable \(x_{j}\).
We now perform a case distinction:
\begin{itemize}[align=left,leftmargin=20pt,itemindent=*,labelindent=20pt]
\item[\itshape Case: (\(i = j\))] \sloppy
    Let us define \(Q_{*0}, Q_{*1}, Q_{0}, Q_{1}, P_{*}, P_{0}, P_{1}\) such that: 
{
\setlength{\abovedisplayskip}{0pt}
\setlength{\belowdisplayskip}{2pt}
    \begin{align*}
    \mathcal{M}_{Q_{*0}} &\equiv \llbracket \eFilter(\mathcal{P}_{\overline{o_{i}}},\mathcal{F}_{\overline{x_{j}}})  \rrbracket
    &
    \mathcal{M}_{Q_{0}} &\equiv \llbracket \eFilter(\mathcal{P}_{o_{i}\overline{s_{i}}},\mathcal{F}_{\overline{x_{j}}}) \rrbracket
    \\
    \mathcal{M}_{Q_{*1}} &\equiv \llbracket \eFilter(\mathcal{P}_{\overline{o_{i}}},\mathcal{F}_{x_{j}}) \rrbracket
    &
    \mathcal{M}_{Q_{1}} &\equiv \llbracket \eFilter(\mathcal{P}_{o_{i}s_{i}},\mathcal{F}_{x_{j}}) \rrbracket
    \\[1em]
    P_{*} &= \{\mathbf{p} \mid \mathbf{p}[*/x_{i}] \in P\}
    &
    P_{0} &= \{\mathbf{p} \mid \mathbf{p}[0/x_{i}] \in P\} 
    \\
    &&
    P_{1} &= \{\mathbf{p} \mid \mathbf{p}[1/x_{i}] \in P\}
    \end{align*}
}
    Moreover, let \(Q\) be defined as 
    \[
    \begin{gathered}
        \{\mathbf{q} \mid \mathbf{q}[*/x_{i}] \in Q_{*0}\} \cup \{\mathbf{q} \mid \mathbf{q}[*/x_{i}] \in Q_{*1}\} \cup {}\\
        \{\mathbf{q} \mid \mathbf{q}[0/x_{i}] \in Q_{0}\} \cup \{\mathbf{q} \mid \mathbf{q}[1/x_{i}] \in Q_{1}\}
    \end{gathered}
    \]
    By the induction hypothesis, we have the following:
    \begin{align*}
        & (\forall \mathbf{q} \in Q_{*0}.\ \exists \mathbf{u} \in \llbracket{f_{\overline{x_{i}}}}\rrbracket.\ \mathbf{u} \in \llbracket{\mathbf{q}}\rrbracket) \land {}
        (\forall \mathbf{q} \in Q_{*1}.\ \exists \mathbf{u} \in \llbracket{f_{x_{i}}}\rrbracket.\ \mathbf{u} \in \llbracket{\mathbf{q}}\rrbracket) \land {}\\
        & (\forall \mathbf{q} \in Q_{0}.\ \exists \mathbf{u} \in \llbracket{f_{\overline{x_{i}}}}\rrbracket.\ \mathbf{u} \in \llbracket{\mathbf{q}}\rrbracket) \land {}
        (\forall \mathbf{q} \in Q_{1}.\ \exists \mathbf{u} \in \llbracket{f_{x_{i}}}\rrbracket.\ \mathbf{u} \in \llbracket{\mathbf{q}}\rrbracket)\\
        \Longleftrightarrow\ &\\
        & (\forall \mathbf{q} \in Q_{*0}.\ \exists \mathbf{u}[0/x_{i}] \in \llbracket{f}\rrbracket.\ \mathbf{u}[0/x_{i}] \in \llbracket{\mathbf{q}}\rrbracket) \land {}\\
        & (\forall \mathbf{q} \in Q_{*1}.\ \exists \mathbf{u}[1/x_{i}] \in \llbracket{f}\rrbracket.\ \mathbf{u}[1/x_{i}] \in \llbracket{\mathbf{q}}\rrbracket) \land {}\\
        & (\forall \mathbf{q} \in Q_{0}.\ \exists \mathbf{u}[0/x_{i}] \in \llbracket{f}\rrbracket.\ \mathbf{u}[0/x_{i}] \in \llbracket{\mathbf{q}}\rrbracket) \land {}\\
        & (\forall \mathbf{q} \in Q_{1}.\ \exists \mathbf{u}[1/x_{i}] \in \llbracket{f}\rrbracket.\ \mathbf{u}[1/x_{i}] \in \llbracket{\mathbf{q}}\rrbracket)\\
        \Longrightarrow\ &\\
        & (\forall \mathbf{q} \in \{\mathbf{q'} | \mathbf{q'}[0/x_{i}] \in Q_{*0}\}.\ \exists \mathbf{u}[0/x_{i}] \in \llbracket{f}\rrbracket.\ \mathbf{u}[0/x_{i}] \in \llbracket{\mathbf{q}}\rrbracket) \land {}\\
        & (\forall \mathbf{q} \in \{\mathbf{q'} | \mathbf{q'}[1/x_{i}] \in Q_{*1}\}.\ \exists \mathbf{u}[1/x_{i}] \in \llbracket{f}\rrbracket.\ \mathbf{u}[1/x_{i}] \in \llbracket{\mathbf{q}}\rrbracket) \land {}\\
        & (\forall \mathbf{q} \in \{\mathbf{q'} | \mathbf{q'}[0/x_{i}] \in Q_{0}\}.\ \exists \mathbf{u}[0/x_{i}] \in \llbracket{f}\rrbracket.\ \mathbf{u}[0/x_{i}] \in \llbracket{\mathbf{q}}\rrbracket) \land {}\\
        & (\forall \mathbf{q} \in \{\mathbf{q'} | \mathbf{q'}[1/x_{i}] \in Q_{1}\}.\ \exists \mathbf{u}[1/x_{i}] \in \llbracket{f}\rrbracket.\ \mathbf{u}[1/x_{i}] \in \llbracket{\mathbf{q}}\rrbracket)\\
        \Longleftrightarrow\ &\\
        & (\forall \mathbf{q} \in \{\mathbf{q'} | \mathbf{q'}[0/x_{i}] \in Q_{*0}\}.\ \exists \mathbf{u} \in \llbracket{f}\rrbracket.\ \mathbf{u} \in \llbracket{\mathbf{q}}\rrbracket) \land {}\\
        & (\forall \mathbf{q} \in \{\mathbf{q'} | \mathbf{q'}[1/x_{i}] \in Q_{*1}\}.\ \exists \mathbf{u} \in \llbracket{f}\rrbracket.\ \mathbf{u} \in \llbracket{\mathbf{q}}\rrbracket) \land {}\\
        & (\forall \mathbf{q} \in \{\mathbf{q'} | \mathbf{q'}[0/x_{i}] \in Q_{0}\}.\ \exists \mathbf{u} \in \llbracket{f}\rrbracket.\ \mathbf{u} \in \llbracket{\mathbf{q}}\rrbracket) \land {}\\
        & (\forall \mathbf{q} \in \{\mathbf{q'} | \mathbf{q'}[1/x_{i}] \in Q_{1}\}.\ \exists \mathbf{u} \in \llbracket{f}\rrbracket.\ \mathbf{u} \in \llbracket{\mathbf{q}}\rrbracket)\\
    \end{align*}
    It follows that \(Q\) satisfies \Cref{proof:eq:1:2}.
    To prove that \(Q\) satisfies \Cref{proof:eq:1:2}, we reason as follows:
    \begin{align*}
        & (\forall \mathbf{r} \in P_* \setminus Q_{*0}.\ \forall\mathbf{w} \in \llbracket{f_{\overline{x_{i}}}}\rrbracket.\ \mathbf{w} \not\in \llbracket{\mathbf{r}}\rrbracket) \land {}\\
        & (\forall \mathbf{r} \in P_* \setminus Q_{*1}.\ \forall \mathbf{w} \in \llbracket{f_{x_{i}}}\rrbracket.\ \mathbf{w} \not\in \llbracket{\mathbf{r}}\rrbracket) \land {}\\
        & (\forall \mathbf{r} \in P_{0} \setminus Q_{0}.\ \forall \mathbf{w} \in \llbracket{f_{\overline{x_{i}}}}\rrbracket.\ \mathbf{w} \not\in \llbracket{\mathbf{r}}\rrbracket) \land {}\\
        & (\forall \mathbf{r} \in P_{1} \setminus Q_{1}.\ \forall \mathbf{w} \in \llbracket{f_{x_{i}}}\rrbracket.\ \mathbf{w} \not\in \llbracket{\mathbf{r}}\rrbracket)\\
        \Longleftrightarrow\ &\\
        & (\forall \mathbf{r} \in P_* \setminus Q_{*0}.\ \forall\mathbf{w}[0/x_{i}] \in \llbracket{f}\rrbracket.\ \mathbf{w}[0/x_i] \not\in \llbracket{\mathbf{r}}\rrbracket) \land {}\\
        & (\forall \mathbf{r} \in P_* \setminus Q_{*1}.\ \forall \mathbf{w}[1/x_i] \in \llbracket{f}\rrbracket.\ \mathbf{w}[1/x_i] \not\in \llbracket{\mathbf{r}}\rrbracket) \land {}\\
        & (\forall \mathbf{r} \in P_{0} \setminus Q_{0}.\ \forall \mathbf{w}[0/x_i] \in \llbracket{f}\rrbracket.\ \mathbf{w}[0/x_i] \not\in \llbracket{\mathbf{r}}\rrbracket) \land {}\\
        & (\forall \mathbf{r} \in P_{1} \setminus Q_{1}.\ \forall \mathbf{w}[1/x_i] \in \llbracket{f}\rrbracket.\ \mathbf{w}[1/x_i] \not\in \llbracket{\mathbf{r}}\rrbracket)\\
        \Longleftrightarrow\ &\{\text{Property }\mathbf{w}[0/x] \notin \llbracket{\mathbf{r}[1/x]}\rrbracket \land \mathbf{w}[1/x] \notin \llbracket{\mathbf{r}[0/x]}\rrbracket\}\\
        & (\forall \mathbf{r} \in P_* \setminus \{\mathbf{s} \mid \mathbf{s}[*/x_i] \in Q_{*0} \lor \mathbf{s}[0/x_i] \in Q_{*0}\}.\ \forall\mathbf{w}[0/x_{i}] \in \llbracket{f}\rrbracket.\ \mathbf{w}[0/x_i] \not\in \llbracket{\mathbf{r}}\rrbracket) \land {}\\
        & (\forall \mathbf{r} \in P_* \setminus \{\mathbf{s} \mid \mathbf{s}[*/x_i] \in Q_{*1} \lor \mathbf{s}[1/x_i] \in Q_{*1}\}.\ \forall \mathbf{w}[1/x_i] \in \llbracket{f}\rrbracket.\ \mathbf{w}[1/x_i] \not\in \llbracket{\mathbf{r}}\rrbracket) \land {}\\
        & (\forall \mathbf{r} \in P_{0} \setminus \{\mathbf{s} \mid \mathbf{s}[*/x_i] \in Q_{0} \lor \mathbf{s}[0/x_i] \in Q_{0}\}.\ \forall \mathbf{w}[0/x_i] \in \llbracket{f}\rrbracket.\ \mathbf{w}[0/x_i] \not\in \llbracket{\mathbf{r}}\rrbracket) \land {}\\
        & (\forall \mathbf{r} \in P_{1} \setminus \{\mathbf{s} \mid \mathbf{s}[*/x_i] \in Q_{1} \lor \mathbf{s}[1/x_i] \in Q_{1}\}.\ \forall \mathbf{w}[1/x_i] \in \llbracket{f}\rrbracket.\ \mathbf{w}[1/x_i] \not\in \llbracket{\mathbf{r}}\rrbracket)\\
        \Longleftrightarrow\ &\{\text{Definition of } P_{*}, P_{0}, P_{1}\}\\
        & (\forall \mathbf{r} \in P_* \setminus \{\mathbf{s} \mid \mathbf{s}[*/x_i] \in Q_{*0}\}.\ \forall\mathbf{w}[0/x_{i}] \in \llbracket{f}\rrbracket.\ \mathbf{w}[0/x_i] \not\in \llbracket{\mathbf{r}}\rrbracket) \land {}\\
        & (\forall \mathbf{r} \in P_* \setminus \{\mathbf{s} \mid \mathbf{s}[*/x_i] \in Q_{*1}\}.\ \forall \mathbf{w}[1/x_i] \in \llbracket{f}\rrbracket.\ \mathbf{w}[1/x_i] \not\in \llbracket{\mathbf{r}}\rrbracket) \land {}\\
        & (\forall \mathbf{r} \in P_{0} \setminus \{\mathbf{s} \mid \mathbf{s}[0/x_i] \in Q_{0}\}.\ \forall \mathbf{w}[0/x_i] \in \llbracket{f}\rrbracket.\ \mathbf{w}[0/x_i] \not\in \llbracket{\mathbf{r}}\rrbracket) \land {}\\
        & (\forall \mathbf{r} \in P_{1} \setminus \{\mathbf{s} \mid \mathbf{s}[1/x_i] \in Q_{1}\}.\ \forall \mathbf{w}[1/x_i] \in \llbracket{f}\rrbracket.\ \mathbf{w}[1/x_i] \not\in \llbracket{\mathbf{r}}\rrbracket)\\
        \Longleftrightarrow\ &\{\text{Definition of } P \text{ and } Q\}\\
        & \forall \mathbf{r} \in P \setminus Q.\ \forall \mathbf{w} \in \llbracket{f}\rrbracket.\ \mathbf{w} \not\in \llbracket{\mathbf{r}}\rrbracket
    \end{align*}
    \item[\itshape Cases (\(i > j\)) and (\(i < j\)):] These cases follows a similar argumentation. \qed
\end{itemize}
\end{proof}

}

\newcommand{\aFilter}{\ensuremath{\textsc{Filter}_{\forall}}}
\begin{algorithm}
\caption{$\aFilter(\mathcal{P}, \mathcal{F})$}\label{algo:aFilter}
\textbf{Input: } meta-product BDD $\mathcal{P} = (T, V, \lambda, \delta, \rho)$, BDD $\mathcal{F} = (T', V', \lambda', \delta', \rho')$ \\
\textbf{Output: } meta-product BDD $\mathcal{Q}$
\begin{algorithmic}[1]
\If{$\llbracket\mathcal{P}\rrbracket \equiv \boolFalse$ or $\llbracket\mathcal{F}\rrbracket \equiv \boolFalse$} \Return{$\mathcal{P}$}
\ElsIf{$\llbracket\mathcal{P}\rrbracket = \boolTrue \land \llbracket\mathcal{F}\rrbracket = \boolTrue$} \Return{$\bddTrue$}
\ElsIf{$\llbracket\mathcal{P}\rrbracket = \boolTrue$}
\State{$x_{j} \leftarrow \lambda'(\rho')$}
\State{\Return{$\begin{aligned}[t]
&(\overline{o_j} \land \aFilter(\mathcal{P}, \mathcal{F}_{x_j}) \land \aFilter(\mathcal{P}, \mathcal{F}_{\overline{x_j}}))
\end{aligned}$}}
\ElsIf{$\llbracket\mathcal{F}\rrbracket \equiv \boolTrue$} 
\State{$x_i \leftarrow \mathsf{var}(\lambda(\rho))$}
\State{\Return{$(\overline{o_i} \land \aFilter(\mathcal{P}_{\overline{o_i}}, \mathcal{F}))$}}
\Else
\State{$x_i \leftarrow \mathsf{var}(\lambda(\rho))$}
\State{$x_j \leftarrow \lambda'(\rho')$}
\If{$i = j$}
    \If{$(\llbracket\mathcal{F}_{x_i}\rrbracket \equiv \boolFalse)$}
    \State{\Return{$\begin{aligned}[t]
        &({o_i} \land \overline{s_i} \land \aFilter(\mathcal{P}_{{o_i}{s_{i}}}, \mathcal{F}_{\overline{x_i}})) \lor {}\\
        &(\overline{o_i} \land \aFilter(\mathcal{P}_{\overline{o_{i}}}, \mathcal{F}_{x_j}) \land \aFilter(\mathcal{P}_{\overline{o_{i}}}, \mathcal{F}_{\overline{x_j}}))
        \end{aligned}$}}
    \ElsIf{$(\llbracket\mathcal{F}_{\overline{x_i}}\rrbracket \equiv \boolFalse)$}
    \State{\Return{$\begin{aligned}[t]
        &({o_i} \land {s_i} \land \aFilter(\mathcal{P}_{{o_i}{s_{i}}}, \mathcal{F}_{x_i})) \lor {}\\
        &(\overline{o_i} \land \aFilter(\mathcal{P}_{\overline{o_{i}}}, \mathcal{F}_{x_j}) \land \aFilter(\mathcal{P}_{\overline{o_{i}}}, \mathcal{F}_{\overline{x_j}}))
        \end{aligned}$}}
    \Else{} {\Return{$\begin{aligned}[t]
        &(\overline{o_i} \land \aFilter(\mathcal{P}_{\overline{o_{i}}}, \mathcal{F}_{x_j}) \land \aFilter(\mathcal{P}_{\overline{o_{i}}}, \mathcal{F}_{\overline{x_j}}))
        \end{aligned}$}}
    \EndIf{}
\ElsIf{$i < j$}
    \State{\Return{$\begin{aligned}[t]
    &(\overline{o_i} \land \aFilter(\mathcal{P}_{\overline{o_i}}, \mathcal{F}))
    \end{aligned}$}}
\ElsIf{$i > j$}
    \State{\Return{$\begin{aligned}[t]
        &(\overline{o_j} \land \aFilter(\mathcal{P}, \mathcal{F}_{x_j}) \land \aFilter(\mathcal{P}, \mathcal{F}_{\overline{x_j}}))
    \end{aligned}$}}
\EndIf{}
\EndIf{}
\end{algorithmic}
\end{algorithm}

In \Cref{algo:aFilter}, we universally filter the meta-product BDD \(\llbracket{\mathcal{P}}\rrbracket \equiv m_{P}\) w.r.t.\ Boolean function \(f\) using BDD \(\llbracket{\mathcal{F}}\rrbracket \equiv f\).
The algorithm operates in a similar vein to \Cref{algo:eFilter}.
If \(f\) is \(\boolFalse\), we vacuously satisfy the predicate \({\forall}\) for all \(\mathbf{p} \in P\), and thus we simply return \(\mathcal{P}\).
Otherwise, at a given node of \(\mathcal{P}\) with variable associated with \(x_{i}\), whenever \(o_{i}\) is \(\boolFalse\), the assignment we consider will not have variable \(x_{i}\).
Since we must cover \(f\) completely, the assignment must be present in the sub-calls in both cofactors of \(f\).
Otherwise, we recurse on the cofactors of \(f\) depending on the setting of \(x_{i}\) in our assignment.
However, we must not recurse on a cofactor of \(f\) if it immediately leads to \(\boolFalse\) (lines 13--16), as the base case \(f \equiv \boolFalse\) implies that \(\mathcal{P}\) immediately covers the empty set; thus, to trim irrelevant branch we must perform a check before further recursions.
\begin{restatable}{theorem}{pAll}
Let \(\mathcal{P}\) be the meta-product BDD of a set of assignments \(P \subseteq \pAs\) and BDD \(\mathcal{F}\) s.t.\ \(\llbracket{\mathcal{F}}\rrbracket \equiv f\) and \(\llbracket{f}\rrbracket \subseteq \tAs\).
\(\aFilter(\mathcal{P},\mathcal{F})\) is the meta-product of the largest set of assignments \(Q \subseteq P\) where for all \(\mathbf{q} \in Q\) and for all \(\mathbf{u} \in \llbracket{f}\rrbracket\), we have that \(\mathbf{u} \in \llbracket{\mathbf{q}}\rrbracket\).
\end{restatable}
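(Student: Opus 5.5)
The plan is a simultaneous induction on the pair $(\mathcal{P},\mathcal{F})$, ordered by the sum of their sizes. It mirrors the proof of the previous theorem for $\eFilter$, but first isolates one semantic lemma that does all the work. Write $\mathbf{q}\,\vDash_\forall f$ for $\llbracket f\rrbracket \subseteq \llbracket \mathbf{q}\rrbracket$. For a variable $x$, let $\mathbf{q}' = \mathbf{q}[*/x]$.

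\emph{Decomposition lemma.} For every $\mathbf{q}\in\pAs$:
\begin{itemize}
\item if $\mathbf{q}(x)=*$, then $\mathbf{q}\vDash_\forall f$ iff $\mathbf{q}'\vDash_\forall f_{\overline{x}}$ and $\mathbf{q}'\vDash_\forall f_{x}$;
\item if $\mathbf{q}(x)=0$, then $\mathbf{q}\vDash_\forall f$ iff $f_x\equiv\boolFalse$ and $\mathbf{q}'\vDash_\forall f_{\overline{x}}$;
\item if $\mathbf{q}(x)=1$, symmetrically, iff $f_{\overline{x}}\equiv\boolFalse$ and $\mathbf{q}'\vDash_\forall f_{x}$.
\end{itemize}
Here the cofactors are read as functions whose cover is closed under flipping $x$. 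The proof uses $\llbracket f\rrbracket = \{\mathbf{u}[0/x]\mid \mathbf{u}\in\llbracket f_{\overline{x}}\rrbracket\}\cup\{\mathbf{u}[1/x]\mid \mathbf{u}\in\llbracket f_{x}\rrbracket\}$ by Shannon decomposition. It also uses the property already invoked in the $\exists$-proof: $\mathbf{w}[1/x]\notin\llbracket\mathbf{r}[0/x]\rrbracket$. Two further facts are needed. First, if $x\notin\mathsf{supp}(f)$, then $\mathbf{q}\vDash_\forall f$ forces $\mathbf{q}(x)=*$ or $f\equiv\boolFalse$. Second, $\mathbf{q}\vDash_\forall\boolTrue$ iff $\mathsf{supp}(\mathbf{q})=\varnothing$, and $\mathbf{q}\vDash_\forall\boolFalse$ always holds.

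With the lemma in hand, I go through the algorithm's cases and show in each that the returned BDD is the meta-product of $Q=\{\mathbf{q}\in P\mid \mathbf{q}\vDash_\forall f\}$. The set being \emph{largest} is automatic once the output equals this exact set. As in the previous proof, I write $P_*,P_0,P_1$ for the sets encoded by $\mathcal{P}_{\overline{o_i}}$, $\mathcal{P}_{o_i\overline{s_i}}$ and $\mathcal{P}_{o_is_i}$.

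\begin{itemize}
\item \emph{Line 1.} If $f\equiv\boolFalse$, every $\mathbf{q}$ qualifies, so $Q=P$. If $P=\varnothing$, then $Q=\varnothing$ and returning $\mathcal{P}$ is also correct.
\item \emph{Line 8 and case $i<j$ (line 19).} These treat a variable $x_i$ with $x_i\notin\mathsf{supp}(f)$ and $f\not\equiv\boolFalse$. The extra fact above forces $o_i=0$, and since $f_{x_i}\equiv f_{\overline{x_i}}\equiv f$, the first item of the lemma reduces to the recursive call on $\mathcal{P}_{\overline{o_i}}$.
\item \emph{Case $i>j$ and lines 3--5.} Here $x_j$ is a root variable of $\mathcal{F}$ that $\mathcal{P}$ skips. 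Because $\mathcal{P}$ does not test $o_j$, every $\mathbf{q}\in P$ comes with all three variants $\mathbf{q}[*/x_j]$, $\mathbf{q}[0/x_j]$ and $\mathbf{q}[1/x_j]$.
  \begin{itemize}
  \item The $0$- and $1$-variants satisfy $\vDash_\forall f$ only if one cofactor of $f$ at $x_j$ is $\boolFalse$. This is impossible, since $x_j$ is in the reduced BDD's support and $f\not\equiv\boolFalse$ at this point, so both cofactors are nonempty.
  \item The $*$-variant qualifies iff it lies in both recursive results, by the first item of the lemma. That is exactly the conjunction under $\overline{o_j}$ returned by the algorithm.
  \end{itemize}
\item \emph{Case $i=j$.} I split on which cofactor of $f$ is $\boolFalse$ (at most one can be) and apply the lemma literally:
  \begin{itemize}
  \item the $\overline{o_i}$ branch is the intersection of the two recursive calls on $P_*$;
  \item a signed branch survives only if the opposite cofactor of $f$ is empty, and it then recurses on the non-empty one.
  \end{itemize}
  For this I read line 14's first recursive argument as $\mathcal{P}_{o_i\overline{s_i}}$; the printed $\mathcal{P}_{o_is_i}$ appears to be a typo.
\end{itemize}
In every case, the induction hypothesis is applied to strictly smaller pairs, and the recombined set is reassembled exactly as in the $\eFilter$ proof, via $\{\mathbf{q}\mid\mathbf{q}[*/x]\in\cdot\}$ and so on.

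The step I expect to be the main obstacle is the treatment of \emph{skipped levels} in the meta-product BDD, including the base case on line 2. A reduced $\mathcal{P}$ that skips $o_k,s_k$ encodes all three variants at $x_k$. Under a literal reading, returning $\bddTrue$ when $\mathcal{P}$ and $\mathcal{F}$ are both true would keep assignments with nonempty support, which do not cover $\boolTrue$. I would first try to resolve this by stating the invariant relative to the variables not yet decided on the current path. At $\llbracket\mathcal{F}\rrbracket\equiv\boolTrue$, every remaining level of $\mathcal{P}$ is then forced to $\overline{o}$ by the line-8 recursion whenever $\mathcal{P}$ still tests it.

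If that invariant does not cover the terminal case cleanly, I would strengthen the induction statement to return $\mathcal{P}\land\bigwedge_{k\ge i}\overline{o_k}$ in that situation, and note that this matches the algorithm's output up to the canonical encoding. Once this bookkeeping about which variables a subcall is ``responsible for'' is fixed, every remaining case is a direct instance of the decomposition lemma.
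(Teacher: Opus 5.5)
The step that fails is in your case ``$i>j$ and lines 3--5''. There you discard the $0$- and $1$-variants at $x_j$ by claiming that, since $x_j$ labels the root of a reduced BDD and $f\not\equiv\boolFalse$, both cofactors are nonempty. Reducedness only gives $\mathcal{F}_{x_j}\neq\mathcal{F}_{\overline{x_j}}$. For $f=x_j$ one has $f_{\overline{x_j}}\equiv\boolFalse$, which is exactly the situation you allow in the case $i=j$ (``at most one can be''). When $f_{\overline{x_j}}\equiv\boolFalse$, your own decomposition lemma says that $\mathbf{q}[1/x_j]$ qualifies iff $\llbracket f_{x_j}\rrbracket\subseteq\llbracket\mathbf{q}[*/x_j]\rrbracket$. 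Since $\mathcal{P}$ does not test $o_j$, these variants lie in $P$, so they belong to $Q$; yet lines 5 and 21 return only an $\overline{o_j}$-branch.

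This cannot be repaired in the proof, because the statement is false for the algorithm as printed. Take $n=1$, $\mathcal{P}=\bddTrue$ (i.e.\ $P=\pAs$) and $f=x_1$. The algorithm returns $\overline{o_1}\land\aFilter(\bddTrue,\bddTrue)\land\aFilter(\bddTrue,\bddFalse)\equiv\overline{o_1}$, which encodes only the empty assignment. The largest qualifying set, however, also contains $x_1$. The paper's sketch rests on the same unjustified claim (``assignments that contain $x_j$ will no longer subsume both $f_{x_j}$ and $f_{\overline{x_j}}$''), so it does not fill this gap either.

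Your handling of skipped levels does not close either. Meta-product BDDs are canonical, so returning $\mathcal{P}\land\bigwedge_{k\ge i}\overline{o_k}$ instead of $\bddTrue$ at line 2 is a different output, not the same one ``up to encoding''. The relative invariant also cannot be established: the algorithm never constrains a level that neither $\mathcal{P}$ nor $\mathcal{F}$ tests, and this happens above non-terminal calls too. For example, take $n=2$, $\llbracket\mathcal{P}\rrbracket\equiv\overline{o_2}$ and $f=x_2$. The call goes directly to the case $i=j=2$ (lines 15--16) and returns $\overline{o_2}$. This keeps $x_1$ and $\overline{x_1}$, although only the empty assignment qualifies.

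Your decomposition lemma is correct, and with your case analysis it does prove the theorem, but only for a corrected algorithm:
\begin{itemize}
\item lines 5 and 21 need the same cofactor tests as lines 13--16, adding a signed branch that recurses on $\mathcal{P}$ with the nonempty cofactor;
\item every level skipped by both BDDs must be conjoined with $\overline{o_k}$ whenever the current $f$ is not $\boolFalse$, which requires passing the current level along; line 2 is the terminal instance of this.
\end{itemize}
Your other observations are right: line 14 should recurse on $\mathcal{P}_{o_i\overline{s_i}}$, and line 1 must yield $P$ rather than the empty set claimed in the paper's sketch.
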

{
\allowdisplaybreaks
\begin{proof}[Sketch]
The proof follows from structural induction on \(\mathcal{P}\) and \(\mathcal{F}\).
For brevity, we offer a proof sketch.
Let \(\llbracket{\mathcal{F}}\rrbracket \equiv f\) for Boolean function \(f\).
If \(\llbracket{\mathcal{P}}\rrbracket \equiv m_{\emptyset} \equiv \boolFalse\) or \(\llbracket{\mathcal{F}}\rrbracket \equiv \boolFalse \) (i.e.\ \(\llbracket{f}\rrbracket = \emptyset\)), we correctly return \(\bddFalse\) since \(\llbracket{\bddFalse}\rrbracket \equiv m_{\emptyset}\).
Only if \(\llbracket{\mathcal{P}}\rrbracket \equiv m_{\pAs} \equiv \boolTrue\) and \(\llbracket{\mathcal{F}}\rrbracket \equiv \boolTrue\) (i.e.\ \(\llbracket{f}\rrbracket = \pAs\)) we wish to return \(\pAs\) as this is the largest set satisfying our claim, i.e.\ we explicitly return \(\bddTrue\) since \(\llbracket{\bddTrue}\rrbracket \equiv m_{\pAs}\).
Otherwise, we recurse on the BDD substructures.

Observe that whenever \(f \equiv \boolTrue\), we must only consider assignments for which we do not restrict the covered space further, otherwise we no longer subsume \(f\) (lines 6--8).
Moreover, if \(\llbracket{\mathcal{P}}\rrbracket \equiv m_{\pAs}\) and \(f\) is not constant, we also only consider assignments that do not restrict the covered space, and thus only consider assignments where \(o_{j}\) is \(\boolFalse\), where \(x_{j}\) is the root of \(f\).

Now consider the case where \(\llbracket{\mathcal{P}}\rrbracket \not\equiv m_{\emptyset}\) and \(\llbracket{\mathcal{F}}\rrbracket \not\equiv \boolFalse\), i.e.\ \(\llbracket{f}\rrbracket \neq \emptyset\).
Let \(x_{i}\) be the variable associated with the root label of \(\mathcal{P}\) and \(x_{j}\) the variable rooted at \(\mathcal{F}\).
If the variable \(x_{i}\) occurs before \(x_{j}\), i.e.\ \(i < j\) (lines 18--19), we must only consider the assignments in \(\mathcal{P}\) where \(x_{i}\) does not occur. 
This is because \(x_{i}\) does not impact the evaluation of \(f\) and by having assignments restricting \(x_{i}\), we will no longer fully cover \(f\).
If \(i > j\), we again consider non-occurrence of \(x_{j}\) since assignments that contain \(x_{j}\) will no longer subsume both \(f_{x_{j}}\) and \(f_{\overline{x_{j}}}\).

Finally, we consider the case where \(i = j\).
Since in the base cases we have that whenever \(\llbracket{F}\rrbracket \equiv \boolFalse\) we return \(\mathcal{P}\), we must perform a check before recursion.
If the positive cofactor of \(\mathcal{F}\) is \(\mathbf{0}\), i.e.\ \(f_{x} \equiv \boolFalse\), the space that \(f\) covers must only have \(x_{j}\) set to \(0\). 
Thus, we must consider the assignments that set \(x_{j}\) to \(0\) and the case where \(x_{j}\) does not occur. 
The same argument holds for whenever the negative cofactor of \(\mathcal{F}\) is \(\mathbf{0}\), i.e.\ \(f_{\overline{x}} \equiv \boolFalse\).
Otherwise, we cannot choose assignments that restrict the cover, otherwise we no longer fully cover \(f\).\qed
\end{proof}

}

Finally, we present \Cref{algo:cFilter} for filtering \(\mathcal{P} \equiv m_{P}\) according to \({\supseteq}\) w.r.t.\ BDD \(\mathcal{F}\), where \(\llbracket{\mathcal{F}}\rrbracket \equiv f\).
If \(f \equiv \boolFalse\), no assignment satisfies our predicate filter, and thus we return the empty set representation \(\bddFalse\) (line 1).
Moreover, if \(f \equiv \boolTrue\), any assignment in \(P\) satisfies the filtering predicate, and we return \(\mathcal{P}\) (line 2).
Because our assignment's cover must be completely contained within \(f\), whenever \(o_{i}\) is equivalent to \(\boolFalse\), i.e.\ \(x_{i}\) is not in our assignment, we must check that our assignment is completely covered that both cofactors of \(f\). 
Otherwise, we recurse on the cofactors of \(f\) based on the setting of \(x_{i}\) in our assignment.

\newcommand{\cFilter}{\ensuremath{\textsc{Filter}_{\supseteq}}}
\begin{algorithm}
\caption{$\cFilter(\mathcal{P},\mathcal{F})$}\label{algo:cFilter}
\textbf{Input: } meta-product BDD $\mathcal{P} = (T, V, \lambda, \delta, \rho)$, BDD $\mathcal{F} = (T', V', \lambda', \delta', \rho')$ \\
\textbf{Output: } meta-product BDD $\mathcal{Q}$
\begin{algorithmic}[1]
\If{$\llbracket\mathcal{P}\rrbracket \equiv \boolFalse$ or $\llbracket\mathcal{F}\rrbracket \equiv \boolFalse$} \Return{$\bddFalse$}
\ElsIf{$\llbracket\mathcal{F}\rrbracket \equiv \boolTrue$} \Return{$\mathcal{P}$}
\ElsIf{$\llbracket\mathcal{P}\rrbracket \equiv \boolTrue$} 
    \State{$x_{j} \leftarrow \lambda'(\rho')$}
    \State{\Return{$\begin{aligned}[t]
    &(o_i \land s_i \land \cFilter(\mathcal{P}, \mathcal{F}_{x_j})) \lor (o_i \land \overline{s_i} \land \cFilter(\mathcal{P}, \mathcal{F}_{\overline{x_j}})) \lor {}\\
    &(\overline{o_i} \land \cFilter(\mathcal{P}, \mathcal{F}_{x_j}) \land \cFilter(\mathcal{P}, \mathcal{F}_{\overline{x_j}}))
    \end{aligned}$}}
\Else{}
    \State{$x_i \leftarrow \mathsf{var}(\lambda(\rho))$}
    \State{$x_j \leftarrow \lambda'(\rho')$}
    \If{$i = j$}
        \State{\Return{$\begin{aligned}[t]
        &(o_i \land s_i \land \cFilter(\mathcal{P}_{o_{i}s_{i}}, \mathcal{F}_{x_j})) \lor (o_i \land \overline{s_i} \land \cFilter(\mathcal{P}_{o_{i}\overline{s_{i}}}, \mathcal{F}_{\overline{x_j}})) \lor {}\\
        &(\overline{o_i} \land \cFilter(\mathcal{P}_{\overline{o_{i}}}, \mathcal{F}_{x_j}) \land \cFilter(\mathcal{P}_{\overline{o_{i}}}, \mathcal{F}_{\overline{x_j}}))
        \end{aligned}$}}
    \ElsIf{$i < j$}
        \State{\Return{$\begin{aligned}[t]
        &(o_i \land s_i \land \cFilter(\mathcal{P}_{o_{i}s_{i}}, \mathcal{F})) \lor (o_i \land \overline{s_i} \land \cFilter(\mathcal{P}_{o_{i}\overline{s_{i}}}, \mathcal{F})) \lor {}\\
        &(\overline{o_i} \land \cFilter(\mathcal{P}_{\overline{o_i}}, \mathcal{F}))
        \end{aligned}$}}
    \ElsIf{$i > j$}
        \State{\Return{$\begin{aligned}[t]
        &(o_i \land s_i \land \cFilter(\mathcal{P}, \mathcal{F}_{x_j})) \lor (o_i \land \overline{s_i} \land \cFilter(\mathcal{P}, \mathcal{F}_{\overline{x_j}})) \lor {}\\
        &(\overline{o_i} \land \cFilter(\mathcal{P}, \mathcal{F}_{x_j}) \land \cFilter(\mathcal{P}, \mathcal{F}_{\overline{x_j}}))
        \end{aligned}$}}
    \EndIf{}
\EndIf{}
\end{algorithmic}
\end{algorithm}

\begin{restatable}{theorem}{pCov}
Let \(\mathcal{P}\) be the meta-product BDD of a set of assignments \(P \subseteq \pAs\) and \(\mathcal{F}\) be a BDD s.t.\ \(\llbracket{\mathcal{F}}\rrbracket \equiv f\) and \(\llbracket{f}\rrbracket \subseteq \tAs\).
\(\cFilter(\mathcal{P},\mathcal{F})\) is the meta-product of the largest set of assignments \(Q \subseteq P\) where for all \(\mathbf{q} \in Q\) and for all \(\mathbf{u} \in \llbracket{\mathbf{q}}\rrbracket\), we have that \(\mathbf{u} \in \llbracket{f}\rrbracket\).
\end{restatable}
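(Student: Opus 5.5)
We prove correctness of \(\cFilter\) by well-founded induction on the pair \((\mathcal{P},\mathcal{F})\), where every recursive call strictly shrinks at least one argument to a sub-BDD, in the same manner as the proof for \(\eFilter\). The intended invariant is that \(\cFilter(\mathcal{P},\mathcal{F})\) encodes
\[
Q(P,f)=\{\mathbf{q}\in P \mid \llbracket{\mathbf{q}}\rrbracket\subseteq\llbracket{f}\rrbracket\}.
\]
This is exactly ``the largest set'' in the statement, since the predicate \(\supseteq\) is evaluated per element, so maximality is automatic once we show equality with this set.

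\textbf{Key tool.} The whole proof rests on a cover decomposition for a fixed variable \(x\). For \(\mathbf{q}\) with \(\mathbf{q}(x)=b\in\mathbb{B}\) we have \(\llbracket{\mathbf{q}}\rrbracket\subseteq\llbracket{f}\rrbracket\) iff \(\llbracket{\mathbf{q}[*/x]}\rrbracket\subseteq\llbracket{f[b/x]}\rrbracket\). For \(\mathbf{q}\) with \(\mathbf{q}(x)=*\) we have \(\llbracket{\mathbf{q}}\rrbracket\subseteq\llbracket{f}\rrbracket\) iff both \(\llbracket{\mathbf{q}}\rrbracket\subseteq\llbracket{f_{x}}\rrbracket\) and \(\llbracket{\mathbf{q}}\rrbracket\subseteq\llbracket{f_{\overline{x}}}\rrbracket\). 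Both facts follow directly from the Shannon decomposition and the definition of cover, by splitting each total \(\mathbf{u}\) according to \(\mathbf{u}(x)\).

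\textbf{Base cases.} If \(P=\emptyset\), the result is \(\emptyset\). If \(f\equiv\boolFalse\), every nonempty cover fails the containment, so \(Q=\emptyset\) and the algorithm returns \(\bddFalse\). If \(f\equiv\boolTrue\), every \(\mathbf{q}\) qualifies, so returning \(\mathcal{P}\) is correct.

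\textbf{Inductive step.} Let \(x_i\) be the variable of the root of \(\mathcal{P}\) and \(x_j\) that of \(\mathcal{F}\); when \(\llbracket{\mathcal{P}}\rrbracket\equiv\boolTrue\), \(\mathcal{P}\) has no root variable, take \(i=\infty\) and treat this as the \(i>j\) case. Split \(P\) by the meta-product cofactors into \(P_*, P_0, P_1\), exactly as in the proof for \(\eFilter\). We then argue case by case that the returned meta-product decomposes along \(o_k,s_k\), with \(k=\min(i,j)\):
\begin{itemize}
\item The \(o_k s_k\) part encodes \(Q(P_1,f_{x_k})\).
\item The \(o_k\overline{s_k}\) part encodes \(Q(P_0,f_{\overline{x_k}})\).
\item The \(\overline{o_k}\) part encodes \(Q(P_*,f_{x_k})\cap Q(P_*,f_{\overline{x_k}})\).
\end{itemize}
By the induction hypothesis, the recursive results are the meta-products of these sets. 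Conjunction of meta-products corresponds to set intersection and disjunction with disjoint guards to disjoint union, so the overall result is exactly \(Q(P,f)\) by the key tool.

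In the case \(i<j\), \(f\) does not depend on \(x_i\), so \(f_{x_i}=f_{\overline{x_i}}=f\) and the intersection collapses to a single call. In the case \(i>j\), \(\mathcal{P}\) does not depend on \(o_j,s_j\), so \(\mathcal{P}\) serves as its own cofactor in all three branches. (Note that the algorithm writes \(o_i,s_i\) in the \(i>j\) and \(\llbracket\mathcal{P}\rrbracket\equiv\boolTrue\) branches; this must be read as \(o_j,s_j\).)

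\textbf{Main obstacle.} The delicate point is the non-uniqueness of \(\sigma\): whenever \(o_k=0\) the sign variable \(s_k\) is irrelevant. The \(\overline{o_k}\) branch must therefore produce a meta-product that is independent of \(s_k\). This holds because it is built only from \(\overline{o_k}\) conjoined with results over later variables, so each \(\mathbf{q}\) with \(\mathbf{q}(x_k)=*\) is represented by both sign values. We also need \(\mathcal{P}_{\overline{o_k}}\) to be well defined as a meta-product of \(P_*\), which uses canonicity of meta-products. I would state and prove these facts once as a small lemma about meta-product cofactors, and then reuse it in all three cases.
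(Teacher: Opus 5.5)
Your proposal is correct and follows the same route as the paper's proof sketch. Both use structural induction on $(\mathcal{P},\mathcal{F})$ with the same base cases, handle the case $\llbracket\mathcal{P}\rrbracket\equiv\mathtt{true}$ like $i>j$, collapse the two cofactor calls when $i<j$, and split three ways when $i=j$ (positive cofactor for $x_i$, negative cofactor for $\overline{x_i}$, both cofactors for non-occurrence). What you add is rigor the sketch omits: the explicit invariant $Q(P,f)$, the cover-decomposition lemma, the sign-independence of the $\overline{o_k}$ branch, and the correct reading of $o_i,s_i$ in lines 5 and 13 as $o_j,s_j$.
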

{
\allowdisplaybreaks
\begin{proof}[Sketch]
The proof follows on structural induction on \(\mathcal{P}\) and \(\mathcal{F}\).
In the base case, whenever \(\llbracket{\mathcal{P}}\rrbracket \equiv m_{\emptyset}\) or \(\llbracket{\mathcal{F}}\rrbracket \equiv \boolFalse\) (i.e.\ \(\llbracket{f}\rrbracket = \emptyset\)), we must return \(\bddFalse\) since \(\llbracket{\bddFalse}\rrbracket \equiv m_{\emptyset}\).
In line 2, since \(\llbracket{\mathcal{F}}\rrbracket \equiv \boolTrue\) (\(\llbracket{f}\rrbracket = \tAs\)), the largest set satisfying our claim is \(\mathcal{P}\) itself.
Otherwise, \(\llbracket{f}\rrbracket \neq \tAs\) and \(\llbracket{\mathcal{P}}\rrbracket \equiv m_{\pAs}\) (line 3). 
Let \(x_{j}\) be the variable rooted at \(\mathcal{F}\). 
Whenever \(x_{j}\) occurs in an assignment, we check whether it is fully covered by \(f_{x_{j}}\), and vice versa for \(\overline{x_{j}}\).
If \(x_{j}\) does not occur, it must be covered by both \(f_{x_{i}} \equiv \mathcal{F}_{x_{i}}\) and \(f_{\overline{x_{j}}} \equiv \mathcal{F}_{\overline{x_{j}}}\).
We now consider cases where \(\llbracket{\mathcal{P}}\rrbracket \not\equiv m_{\emptyset}\) and \(\llbracket{f}\rrbracket \neq \emptyset\).
Let \(x_{i}\) be the variable associated to the root label of \(\mathcal{P}\) and \(x_{j}\) the variable rooted at \(\mathcal{F}\).
When \(i > j\) (lines 12--13), we perform the same operations as when \(\llbracket{\mathcal{P}}\rrbracket \equiv m_{\pAs}\) and \(\llbracket{f}\rrbracket \neq \tAs\).
If \(i < j\) (lines 10--11), we need not consider both cofactors of \(\mathcal{F}\) when \(x_{i}\) does not occur in the assignment as \(x_{i}\) does not impact whether the assignment is covered by \(\mathcal{F}\) or not.
Whenever \(i = j\), we again consider the positive and negative cofactors of \(\mathcal{F}\) when having \(x_{i}\) and \(\overline{x_{i}}\) in the assignment, respectively.
Again, we must consider both cofactors of \(\mathcal{F}\) when dealing with non-occurrence of \(x_{i}\) in an assignment.\qed
\end{proof}

}

\section{Feature Causality}
One application for filtering assignment sets is in the context of explaining configurable systems.
Given a set of features and a set of total assignments over these features representing some effect, \emph{feature causality} identifies the features that are the reasons for exhibiting the effect~\cite{Dubslaff2022}.
The classical approach for computing these causes is through existentially filtering prime implicants.
We here demonstrate the use of our implicit filtering algorithms to compute a generalized notion of \emph{feature causes}~\cite{Dubslaff2024}.
\begin{definition}[\cite{Dubslaff2024}]
A \emph{feature cause} for effect configurations \(\mathsf{Effect} \subseteq \tAs\) w.r.t.\ non-effect configurations \(\mathsf{NEffect} \subseteq \tAs\) where \(\mathsf{Effect} \cap \mathsf{NEffect} = \varnothing\) is a partial assignment \(\mathbf{p} \in \pAs\) such that
\begin{enumerate}[leftmargin=*,ref={\textbf{(FPC\arabic*)}},label={\textbf{(FPC\arabic*)}}]
\item\label{fpc1} \(\llbracket{\mathbf{p}}\rrbracket \cap \mathsf{Effect} \neq \varnothing\), \(\llbracket{\mathbf{p}}\rrbracket \cap \mathsf{NEffect} = \varnothing\), and
\item\label{fpc2} \(\llbracket{\mathbf{p}[*/x]}\rrbracket \cap \mathsf{NEffect} \neq \varnothing\) for all \(x \in \mathsf{supp}(\mathbf{p})\).
\end{enumerate}
\end{definition}
\begin{figure}[t]
\centering    {
\noindent
\begin{tikzpicture}[
    nonEffect/.style={
        circle,
        draw=none,
        opacity=0.2,
        fill=orange,
    },
    effect/.style={
        circle,
        draw=none,
        opacity=0.2,
        fill=blue,
    },
    invalid/.style={
        circle,
        draw,
        fill=red!80!white,
        fill opacity=0.2,
    },
    valid/.style={
        circle,
        draw,
        thin,
        fill=green!80!white,
        fill opacity=0.4,
    },
]
\node[
    draw=none,
    fill=none,
    minimum width=\textwidth,
    minimum height=5em,
    outer sep=0pt,
    inner sep=0pt,
] (box) {};
\begin{scope}
    \clip (box.south west) -- (box.south east) -- (box.north east) -- (box.north west) -- cycle;

    \def\radiusEffect{45pt}
    \def\radiusNonEffect{\radiusEffect}

    \node[
        effect,
        minimum width=\dimexpr\radiusEffect*2\relax,
    ] (effectSpace) at ($(box.center) - (\dimexpr0.25\textwidth\relax,0)$) {};
    \node[
        scale=1,
        anchor=south
    ] (eText) at ($(effectSpace|-box.south)$) {\(\mathsf{Effect}\)};
    \node[
        nonEffect,
        minimum width=\dimexpr\radiusNonEffect*2\relax,
    ] (nonEffectSpace) at ($(box.center) + (\dimexpr0.25\textwidth\relax,0)$) {};
    \node[
        scale=1,
    ] (eText) at ($(nonEffectSpace|-eText)$) {\(\mathsf{NEffect}\)};

    \newlength{\vWd}\setlength{\vWd}{5em}
    \node[
        valid,
        minimum width=\vWd
    ] (v) at ($(effectSpace)!1!15:($(effectSpace)+(\radiusEffect,0)$)$) {};
    \node at (v) {\(\llbracket{\mathbf{p}}\rrbracket\)};
    \coordinate (uPt) at ($(v)!0.75!45:($(v)-(0.5\vWd,0)$)$);
    \fill (uPt) circle (1.5pt);
    \node[
        anchor=west
    ] at (uPt) {\(\mathbf{u}\)};

    \node[
        invalid,
        minimum width=4em
    ] (iv1) at ($(nonEffectSpace)!1!-15:($(nonEffectSpace)+(\radiusNonEffect,0)$)$) {};
    \node at (iv1) {\(\llbracket{\mathbf{r}}\rrbracket\)};
    \node[
        invalid,
        minimum width=4.2em,
    ] (iv2) at ($(box)+(15pt,0)$) {};
    \node at (iv2) {\(\llbracket{\mathbf{q}}\rrbracket\)};

    \node[
        anchor=north east,
        outer sep=0pt,
        scale=1.2,
    ] at (box.north east) { \(\Theta_{X}\) };
\end{scope}

\draw[] (box.south west) -- (box.south east) -- (box.north east) -- (box.north west) -- cycle;
\end{tikzpicture}
}
    \caption{Configuration space for feature causes.}\label{fig:fcause}
\end{figure}
\Cref{fig:fcause} illustrates the configuration space for feature causes. 
Condition \ref{fpc1} is satisfied only by assignment \(\mathbf{p}\) as there exists a total assignment \(\mathbf{u}\) in \(\mathsf{Effect}\) in the cover of \(\mathbf{p}\) and no total assignment covered by \(\mathbf{p}\) is in \(\mathsf{NEffect}\).
Assignment \textbf{q} satisfies \(\llbracket{\mathbf{q}}\rrbracket \cap \mathsf{NEffect} = \varnothing\), but does not contain an instance in \(\mathsf{Effect}\), and \(\llbracket{\mathbf{r}}\rrbracket \cap \mathsf{NEffect} \neq \varnothing\).
Assignment \(\mathbf{p}\) would be a feature cause if removing any variable \(x \in \mathsf{supp}(\mathbf{p})\) from \(\mathbf{p}\) would result in covering an instance in \(\mathsf{NEffect}\).
Using our techniques, \Cref{algo:precause} computes these causes.

\newcommand{\pCause}{\ensuremath{\textsc{PCause}}}
\begin{algorithm}
\caption{$\pCause(\mathcal{F}_{\mathsf{Effect}},\mathcal{F}_{\mathsf{NEffect}})$}\label{algo:precause}
\textbf{Input: } BDDs \(\mathcal{F}_{\mathsf{Effect}}\), \(\mathcal{F}_{\mathsf{NEffect}}\) s.t.\ \(\llbracket{\mathcal{F}_{\mathsf{Effect}}}\rrbracket \equiv c_{\mathsf{Effect}}\), \(\llbracket{\mathcal{F}_{\mathsf{NEffect}}}\rrbracket \equiv c_{\mathsf{NEffect}}\), and \(\mathsf{Effect} \cap \mathsf{NEffect} = \varnothing\) \\
\textbf{Output: } meta-product BDD \(\llbracket{\mathcal{Q}}\rrbracket \equiv Q\) s.t.\ \(\mathbf{q}\) is a cause for \(\mathsf{Effect}\) w.r.t.\ \(\mathsf{NEffect}\) for all \(\mathbf{q} \in Q\).
\begin{algorithmic}[1]
\State{\(\mathcal{P} \leftarrow \Prime(\overline{\mathcal{F}_{\mathsf{NEffect}}})\)}
\State\Return{\(\eFilter(\mathcal{P}, \mathcal{F}_{\mathsf{Effect}})\)}
\end{algorithmic}
\end{algorithm}

Moreover, we may compute the set of most general covers~\cite{Dubslaff2024} using the implicit techniques we introduced in the last section.
Most-general causes filter those causes that are covered by another cause and thus would serve as shorter explanation for the effect.
For computing such covers, we present a generalization based on $\forall$-filtering in \Cref{algo:mpGeneralize}, which removes primes covered by other primes.
This is done by repeatedly taking a candidate assignment and extracting a more general assignment w.r.t.\ space \(f\).
Candidate (partial) assignments are represented by meta-product BDDs; the function \(\mathsf{OneSAT}(\mathcal{Q})\) obtains a satisfying assignment of the input meta-product BDD \(\mathcal{Q}\), which in itself is a meta-product BDD representing a singleton assignment from \(Q\), i.e.\ \(\mathsf{OneSAT}(\mathcal{Q}) = \mathcal{B}_{\mathbf{p}}\) such that \(\llbracket{\mathcal{Q}}\rrbracket \equiv m_{Q}\), \(\llbracket{\mathcal{B}_{\mathbf{p}}}\rrbracket \equiv m_{\{\mathbf{p}\}}\), and \(\mathbf{p} \in Q\).
Applying \(\mathsf{cover}\) on a meta-product BDD is the cover of the represented products, i.e.\ \(\mathsf{cover}(\mathcal{P}) = \mathcal{F}\) such that \(\llbracket{\mathcal{F}}\rrbracket \equiv f\), \(\llbracket{\mathcal{P}}\rrbracket \equiv m_{P}\), and \(\llbracket{f}\rrbracket = \llbracket{P}\rrbracket\); this operation is implemented by existentially removing all occurrence variables from the meta-product and re-assigning the sign variables to the variable domain of \(\mathcal{F}\).
The resulting output of \Cref{algo:mpGeneralize} is an equivalence class of causes that are cannot be further generalized. 
To obtain all general causes, we run \Cref{algo:mostGeneral}.
The idea is to pick a candidate prime (line 2), generalize them to obtain the most general equivalence class covering this candidate prime, and add these general causes to the set \(\mathcal{Q}\) (line 3).
Afterward, we remove the remaining products that cover the same space in \(f\) as our computed equivalence class cover and repeat the procedure until we have no more candidates.

\begin{example}\label{ex:email}
\sloppy
Consider the email example from~\cite{Dubslaff2024}.
The system has features \(F = \{m,s,e,c,a,r\}\) which formalizes the base e\underline{m}ail function, and optional features for \underline{s}igning and \underline{e}ncrypting.
If one encrypts, one must choose exactly one encryption method: \underline{C}aesar, \underline{A}ES, or \underline{R}SA.
\Cref{fig:emailFeatures} shows the \emph{feature diagram}~\cite{Kang1990} of the valid configurations: the feature at the root of the diagram must be included, and nodes with \(\circ\) are optional features.
The arc over edges represent exclusive disjunctions over the connected children nodes.
Here, signing and encryption are optionally enabled features; however, if one chooses to encrypt, one must select exactly one encryption algorithm.
Let us call the set of valid configurations \(\mathsf{Valid} = \{m, mec, mea, mer, ms, msec, msea, mser\}\).
Here, each valid configuration is a total assignment over \(F\) where negatively set variables are omitted, e.g.\ \(m\) represents the total assignment \(m\overline{secar}\).
We assume complete knowledge of the system, i.e.\ we are aware of all valid configurations and all valid configurations for which the effect may be observed, and thus \(\mathsf{Valid} = \mathsf{Effect} \cup \mathsf{NEffect}\).

Now assume that we are interested in the property ``long decipher time'', and let the property hold whenever AES or RSA is enabled, i.e.\ \(\mathsf{Effect} = \{mea,mer,msea,mser\}\).
Again, each effect is a total assignment over \(F\) with negatively set variables omitted.
Applying prime implicants over \(\overline{\mathsf{NEffect}} = \overline{\mathsf{Valid}\setminus\mathsf{Effect}}\), yields the following (partial) assignments: \(\overline{m},\ \overline{e}c,\ e\overline{c},\ a,\ r\).
Filtering these primes to relevant causes yields the following feature causes: \(e\overline{c},\ a,\ r\).
In other words, enabling encryption and not choosing the Caesar method will make the effect observable. 
Moreover, having AES or RSA enabled also makes the effect observable.

Given a set of feature causes, we may provide the user a subset of causes which sufficiently cover all effects. 
Observe that \(e\overline{c}\) covers \(a\) and \(r\) w.r.t.\ \(\mathsf{Valid}\), i.e.\ \((\llbracket{a}\rrbracket \cap \mathsf{Valid}) \subset (\llbracket{e\overline{c}}\rrbracket \cap \mathsf{Valid})\) and \((\llbracket{r}\rrbracket \cap \mathsf{Valid}) \subset (\llbracket{e\overline{c}}\rrbracket \cap \mathsf{Valid})\).
Thus, \(\{e\overline{c}\}\) is the set of most general causes.
\end{example}

\begin{figure}[t]
\centering\newlength{\unit}
\setlength{\unit}{3em}
\begin{tikzpicture}[
		feature/.style={rectangle, line width=1pt,draw,inner sep=.75em}
	]
	\node[feature] (emailSystem) {%
		e\underline{m}ail system
	};

	\begin{scope}[anchor=north, shift={($(emailSystem.south) + (0,-.8\unit)$)}]
		\matrix[column sep=.5\unit]{
			\node[feature] (sign) {%
				\underline{s}ign
			}; &
			\node[feature] (encrypt) {%
				\underline{e}ncryption
			}; \\
		};
	\end{scope}

	\begin{scope}[anchor=north, shift={($(encrypt.south) + (0,-.8\unit)$)}]
		\matrix[column sep = .5\unit]{
			\node[feature] (caesar) {%
				\underline{C}aesar
			};
			 &
			\node[feature] (aes) {%
				\underline{A}ES
			};
			 &
			\node[feature] (rsa) {%
				\underline{R}SA
			}; \\
		};
	\end{scope}

	\draw[] (emailSystem.south) -- (sign.north);
	\node[circle,draw,fill=white,inner sep=.25em,anchor=center] at (sign.north) {};
	\draw[] (emailSystem.south) -- (encrypt.north);
	\node[circle,draw,fill=white,inner sep=.25em,anchor=center] at (encrypt.north) {};

	\draw[] (encrypt.south) -- (caesar.north);
	\draw[] (encrypt.south) -- (aes.north);
	\draw[] (encrypt.south) -- (rsa.north);

	\coordinate (a) at (caesar.north);
	\coordinate (b) at (encrypt.south);
	\coordinate (c) at (rsa.north);
	\draw pic[draw] {angle=a--b--c};
\end{tikzpicture}
    \caption{Feature diagram of the email example~\cite{Dubslaff2024}}\label{fig:emailFeatures}
\end{figure}

\def\gen{\textsc{GeneralCover}}
\begin{algorithm}
    \algrenewcommand\algorithmicrepeat{\textbf{do}}
    \algrenewcommand\algorithmicuntil{\textbf{while}}

    \floatname{algorithm}{Algorithm}
    \algrenewcommand\algorithmicrequire{\textbf{Input:}}
    \algrenewcommand\algorithmicensure{\textbf{Output:}}
    \caption{$\gen(\mathcal{P},\mathbf{p},\mathcal{F})$}\label{algo:mpGeneralize}
    \begin{algorithmic}[1]
        \Require~{meta-product BDD \(\mathcal{P}\) s.t.\ \(\llbracket{\mathcal{P}}\rrbracket \equiv m_{P}\), BDD \(\mathcal{B}_{\mathbf{p}}\) where \(\llbracket{\mathcal{B}_{\mathbf{p}}}\rrbracket \equiv m_{\{\mathbf{p}\}}\) for \(\mathbf{p} \in P\), and BDD \(\mathcal{F}\).}
        \Ensure~{meta-product BDD $\mathcal{Q}$.}
        \State{$\mathcal{Q} \leftarrow \mathcal{P}$; $\mathcal{B}_{\mathbf{q}} \leftarrow \mathcal{B}_{\mathbf{p}}$}
        \Repeat{} 
        \State{$\mathcal{Q} \leftarrow \mathcal{Q} \land \lnot \mathcal{B}_{\mathbf{q}}$}
        \State{$\mathcal{Q} \leftarrow \aFilter(\mathcal{Q},(\mathsf{cover}(\mathcal{B}_{\mathbf{q}}) \land \mathcal{F}))$}
        \Until{\(\mathcal{B}_{\mathbf{q}} \leftarrow \mathsf{OneSAT}(\mathcal{Q})\)}
        \State{\Return{$\aFilter(\mathcal{Q},(\mathsf{cover}(\mathcal{B}_{\mathbf{q}}) \land \mathcal{F}))$}}
    \end{algorithmic}
\end{algorithm}

\def\mostGen{\textsc{AllGeneralCovers}}
\begin{algorithm}
    \floatname{algorithm}{Algorithm}
    \algrenewcommand\algorithmicrequire{\textbf{Input: }}
    \algrenewcommand\algorithmicensure{\textbf{Output: }}
    \caption{$\mostGen(\mathcal{P}, \mathcal{F})$}\label{algo:mostGeneral}
    \begin{algorithmic}[1]
        \Require~{meta-product BDD $\mathcal{P}$ and BDD \(\mathcal{F}\)}
        \Ensure~{meta-product BDD $\mathcal{Q}$}
        \State{$\mathcal{Q} \leftarrow \bddFalse$; $\mathcal{P}' \leftarrow \mathcal{P}$}
        \While{$\mathcal{B}_{\mathbf{p}} \leftarrow \mathsf{OneSAT(\mathcal{P}')}$}
            \State{$\mathcal{Q} \leftarrow \mathcal{Q} \lor \gen(\mathcal{P}',\mathcal{B}_{\mathbf{p}},\mathcal{F})$}
            \State{$\mathcal{P}' \leftarrow \mathcal{P}' \land \lnot {\cFilter(\mathcal{Q},(\lnot{\mathcal{F}} \lor \mathsf{cover}(\mathcal{B}_{\mathbf{p}})))}$}
        \EndWhile{}
        \State{\Return{$\mathcal{Q}$}}
    \end{algorithmic}
\end{algorithm}

\section{Toolkit}
We have implemented our tools in C++ using the BuDDy BDD library~\cite{LindNielsen1999}.
The repository may be found using the following link:
\begin{center}
\texttt{https://gitlab.tue.nl/201810931/filtered-primes}
\end{center}
In this section, we provide a short overview of our tools.
\begin{center}
    \renewcommand\arraystretch{1.5}
    \addtolength{\tabcolsep}{0.5em}
    \begin{NiceTabularX}{\textwidth}{@{}>{\ttfamily}c X@{}}
    \toprule
    \normalfont{Program} & Description\\
    \midrule
    \rowcolors{gray!10}{}[respect-blocks]
    \Block[]{1-1}{formula2bdd} & Compile propositional logic formulas into BDDs.\\
    \Block[]{1-1}{bddOp} & Applies BDD operations to provided BDD file(s).\\
    \Block[]{1-1}{primes} & Coudert and Madre's implicit prime implicants.\\
    \Block[]{1-1}{pfilter} & Filters a meta-product BDD.\\
    \Block[]{1-1}{i2e} & Implicit to explicit assignments.\\
    \midrule
    \Block[m]{1-1}{fcause} & All-in-one tool to compute filtered prime implicants for feature (pre)causes. Combines implicit prime computation and filtering.\\
    \Block[]{1-1}{generalize} & Finds the set of most general causes.\\
    \Block[m]{1-1}{mgCandidate} & Given the set of most general causes, finds a candidate that sufficiently covers relevant instances.\\
    \bottomrule
    \end{NiceTabularX}
\end{center}

\subsubsection{Example Use Case.}
Using our tools, we provide an example of computing the feature causes of the email system taken from \Cref{ex:email}.
This example is also provided within the source code repository of our tools.

\begin{example}
Assume we are given the following files:
\begin{center}
\setlength{\tabcolsep}{10pt}
\begin{NiceTabularX}{\textwidth}{@{}>{\ttfamily}c X@{}}
\toprule
\normalfont{File Name} & Description \\
\midrule
\rowcolors{gray!10}{}[respect-blocks]
\Block[]{1-1}{email.fs} & A list of feature names, each on their own line.\\
\Block[]{1-1}{on.dnf} & A DNF formula representing the ON-set as \(\mathsf{On}\) we witness, where each disjunct/product is on its own line.\\
\Block[]{1-1}{valid.dnf} & A DNF formula representing the configuration space as the care set \(\mathsf{Valid}\).\\
\bottomrule
\end{NiceTabularX}
\end{center}
First, we obtain the BDDs from the formula files.
Running the following command gives the file \texttt{on.bdd} and \texttt{valid.bdd}.
\begin{lstlisting}
    formula2bdd ./email.fs ./on.dnf ./on.bdd
    formula2bdd ./email.fs ./valid.dnf ./valid.bdd
\end{lstlisting}

Since we are computing feature causes, we must first obtain the relevant BDDs for prime computation.
As \(\mathsf{On}\) may contain invalid configurations, we restrict the \(\mathsf{On}\) BDD to the set of \(\mathsf{Valid}\) configurations to obtain the \(\mathsf{Effect}\) BDD, i.e.\ \(\mathsf{Effect} = \mathsf{Valid} \cap \mathsf{On}\).
For feature causes, we must first find the primes of \(\overline{\mathsf{Valid}} \cup \mathsf{Effect}\).
Thus, using \texttt{bddOp} we first obtain \(\overline{\mathsf{Valid}}\), then our prime implicant search space \(\overline{\mathsf{Valid}} \cup \mathsf{Effect}\).
After which, we compute the prime implicants.
\begin{lstlisting}
    bddOp --and ./valid.bdd ./on.bdd ./effect.bdd
    bddOp --not ./valid.bdd ./nValid.bdd
    bddOp --or ./nValid.bdd ./on.bdd ./nV_or_e.bdd
    primes ./email.fs ./nV_or_e.bdd ./candidateCauses.bdd
\end{lstlisting}

To obtain the feature causes, we use our \(\eFilter\) algorithm on the candidate causes over the \(\mathsf{Effect}=\mathsf{Valid} \cap \mathsf{On}\) configuration space.
\begin{lstlisting}
    pfilter --exists ./candidateCauses.bdd ./effect.bdd ./fCauses.bdd
\end{lstlisting}

As the meta-product BDD implicitly represents the causes, we may obtain a list of causes using the \texttt{i2e} tool:
\begin{lstlisting}
    i2e ./email.fs ./fCauses.bdd
\end{lstlisting}

We obtain the following output after the procedure, comprising all the feature causes and explicitly listing them as partial assignments:\\[2pt]
\noindent
\tikz{\node[
text width=\textwidth,
align=left,
fill=gray!20,
font=\ttfamily,
outer sep=0pt
]{
rsa\\
encrypt {\textasciitilde}caesar\\
aes
};
}
In \Cref{ex:email}, we have that (\(r,\ a,\ e\overline{c}\)) are the feature causes of the email system.
Observe that we properly obtain all relevant feature causes using our tool.

Going further, one may compute the largest set of causes that produce a most general cover:
\begin{lstlisting}
    generalize ./email.fs ./valid.bdd ./on.bdd ./fCauses.bdd ./genCauses.bdd
    i2e ./email.fs ./genCauses.bdd
\end{lstlisting}
\noindent
This yields the following output:\\[2pt]
\noindent
\tikz{\node[
text width=\textwidth,
align=left,
fill=gray!20,
font=\ttfamily,
outer sep=0pt
]{
encrypt {\textasciitilde}caesar
};
}
We see that our tool obtains a set of most general causes.
In this example we obtain the singleton set consisting of one cause: \(e\overline{c}\).
The other causes do not appear as they are subsumed by \(e\overline{c}\), w.r.t.\ the valid configurations.

\subsubsection{Feature Causality Convenience.} Alternatively, for directly computing feature causes without manually executing the above pipeline, one may use the dedicated \texttt{fcause} tool to obtain the feature causes:
\begin{lstlisting}
    fcause ./email.fs ./valid.bdd ./on.bdd ./fCauses.bdd
    i2e ./email.fs ./fCauses.bdd
\end{lstlisting}
These commands yield the same output as the pipeline illustrated above.

\end{example}

\section{Conclusion}\label{sec:conclusion}
We have presented algorithms over the structure of meta-product BDDs to filter sets of assignments implicitly based on various criteria, exemplified by existential, universal, and suset-filtering.
Exemplified by the use case of feature causality, we show how our techniques enable a completely symbolic computation pipeline for deriving feature causes and the set of most general causes.
We offer a software toolkit of our methods, allowing for BDD construction, implicit computation of prime implicants, and filtering symbolic assignment sets.

Future work includes an empirical evaluation of our algorithms by comparing our fully symbolic pipeline with the partially symbolic approach for feature causes~\cite{Dubslaff2024} and integrate performant BDD-based techniques to compile care sets and ON-sets, e.g.\ required for symbolic feature model and effect set representations~\cite{Husung2024,DubHusKaf26}.

\paragraph{Acknowledgements.}
This work was partially supported by the DFG (TRR 248, see \url{https://perspicuous-computing.science}, project ID 389792660) and the NWO through Veni grant VI.Veni.222.431.

\bibliographystyle{splncs04}
\bibliography{./main.bib}

\end{document}